\documentclass[12pt, draftclsnofoot, onecolumn]{IEEEtran}
\usepackage{amssymb}
\usepackage{amsfonts}
\usepackage{amsmath}
\usepackage{graphicx}
\usepackage{array}
\usepackage{psfrag}
\usepackage{cite}
\usepackage{mathtools}
\usepackage{color}
\usepackage{suffix}
\usepackage[normalem]{ulem}
\usepackage{balance}
\usepackage{amsthm}
\usepackage[english]{babel}
\usepackage{amsmath,esint}
\usepackage[mathscr]{euscript}
\usepackage{subfigure}
\usepackage{bigints}
\usepackage{nomencl}
\usepackage{balance}
\usepackage{multirow}
\usepackage{bm}
\usepackage{tipa}
\usepackage{ulem}
\makenomenclature
\setcounter{MaxMatrixCols}{10}
\subfiguretopcaptrue

\theoremstyle{definition}

\theoremstyle{plain}

\newtheorem{theorem}{Theorem}
\newtheorem{corollary}[]{Corollary}

\begin{document}

\title{On the Terminal Location Uncertainty in Elliptical Footprints: Application in Air-to-Ground Links}
\author{Alexander~Vavoulas,~Nicholas Vaiopoulos,~Harilaos~G.~Sandalidis,~and Konstantinos K. Delibasis

\thanks{The authors are with the Department of Computer Science and Biomedical
Informatics, University of Thessaly, Papasiopoulou 2-4, 35 131, Lamia,
Greece, e-mails: \{vavoulas,nvaio,sandalidis,kdelimpasis\}@dib.uth.gr.}
}
\maketitle

\begin{abstract}
Wireless transmitters (Txs) that radiate downward in a direction often generate circular footprints on the ground.   The configurational flexibility of these footprints is inherently limited as coverage adjustments are restricted to variations in radius, the only parameter available for tuning. This simplification is inadequate for scenarios that require asymmetric coverage, extended service areas, or dynamic footprint adaptation due to antenna tilt or changes in altitude of unmanned aerial vehicles (UAVs)  . In specific scenarios, the use of elliptical cells can offer increased flexibility for providing user coverage due to unique network characteristics. For example, an elliptical footprint can be produced when a practical directional antenna with unequal azimuth and elevation half-power beamwidths is used in high-speed railway networks. Another common scenario involves the production of an elliptical footprint when an airborne Tx radiates at an angle by tilting its directional antenna by a few degrees. This paper aims to investigate for the first time the association between the random location of the user within an elliptical coverage area and the performance of a wireless communication link considering these scenarios. We assume a UAV as a Tx, although a tall cellular base station tower could also be employed without losing generality.  To gain a deeper understanding of the impact of random location, we derive the relevant distance and signal-to-noise ratio metrics and examine the outage probability for a single-user link, as well as the throughput in a multiuser scenario. This analysis accounts for both random terminal locations and fading impairments in both cases. The findings provide valuable insights into the performance of comparable wireless systems.

\end{abstract}

\begin{IEEEkeywords}
Air-to-ground links, Unmanned aerial vehicles, elliptical coverage, random user location, fading, performance analysis.
\end{IEEEkeywords}

\section{Introduction}
\subsection{Preliminaries}

\IEEEPARstart{O}{n} several occasions, wireless transmitters (Txs) at a certain altitude radiate directionally downward, generating a two-dimensional (2D) circular footprint on the ground. This consideration is often adopted in many studies exploring the performance in both outdoor, e.g., terrestrial cellular networks \cite{B:Goldsmith}, satellite-to-earth transmission \cite{B:Maral}, or unmanned aerial vehicle (UAV) systems \cite{B:Zhang}, and indoor environments with respect to visible light communications (VLC) inside a room \cite{B:Ghassemlooy}. Being a relatively simplistic shape, a circular disk makes the analysis much more straightforward, evident in problems focusing on the effect of the user's position uncertainty. As terrestrial users can be found at any point within the coverage area, identifying their location helps to ensure fast and seamless transmission, which is a prerequisite for the proper operation of the next generations of networks. However, the randomness of the terminal position also induces stochasticity in the path loss, thus introducing a new impairment that should be considered when assessing link quality. Therefore, the more realistic the coverage area, the more robust and reliable the performance analysis becomes from a mathematical point of view. 

In practice, there are several scenarios where circular disks do not accurately describe coverage areas. This limitation is evident when directional transmission occurs, leading to the formation of circular sectors \cite{J:Okorafor}. Furthermore, the hexagonal cell layout is often favored over circular disks for cellular networks since it provides a comprehensive coverage of a geographical region without any gaps \cite{B:Goldsmith}.  Although circular footprints simplify mathematical analysis, they often do not capture the asymmetric nature of real-world coverage areas. For example, in high-speed railway (HSR) networks or when directional antennas are tilted, the coverage area becomes elongated, rendering circular assumptions inadequate. This mismatch can lead to overestimations or underestimations of coverage and performance metrics, particularly in scenarios requiring precise spatial modeling, such as UAV communications \cite{J:Malikov}.  

What is of particular interest is the case of the elliptical shape, which is more complicated.  In contrast to circular models with uniform radial symmetry, the elliptical approach captures spatial anisotropy, which can influence coverage performance, signal strength distribution, and link reliability.  More precisely, this has proven to suitably represent narrow-strip-shaped cells in HSR communications. Elliptical cells offer a higher percentage of coverage than traditional circular ones since users are almost deployed along the rail tracks \cite{C:Liu}, \cite{J:Lin}. An elliptical rather than circular footprint can also be created in vertical transmission when the azimuth and elevation half-power beamwidths (HPBWs) of the directional antenna are not equal \cite{J:Noh}, \cite{C:Tamo}. In this case, the Tx projection on the ground coincides with the ellipse's center. Another more realistic scenario concerns an airborne Tx transmitting not vertically to Earth but at an angle, having turned its directional antenna by a few degrees. Due to the radiation pattern, an elliptical coverage area is formed centered at a distance from the projection of Tx on the terrain \cite{J:Azari}.

\subsection{Related Work}
Typically, terrestrial users are randomly dispersed within the coverage area of a wireless network. Therefore, determining a terminal position randomly placed in a specific 2D area and evaluating associated metrics, such as the distance from a particular point, is critical for wireless system performance. This problem is a research subject of stochastic geometry, as discussed in \cite{B:Chiu} and \cite{J:Hmamouche}. In this context, Mathai's book lists some interesting geometrical probability problems when considering ordinary topologies \cite{B:Mathai}. Some novel distance distributions associated with rhombuses and hexagons were deduced in \cite{B:Zhuang}. 

In addition, many studies focus on the typical circular disk shape in the communications field, as it simplifies mathematical analysis. However, this approach may not always accurately reflect real-world scenarios. For example, Gupta \textit{et al.} in \cite{J:Gupta}, analyzed a cascaded free space optical (FSO)-VLC system where end-to-end users were uniformly distributed in the indoor VLC cells. Furthermore, in \cite{J:Christopoulou}, an underwater downlink setup was investigated, where a Tx at a fixed location above the sea surface communicates with a receiver (Rx) randomly positioned below sea level. In the context of UAV-based communications, Vaiopoulos \textit{et. al.} in \cite{J:Vaiopoulos3}, explored a UAV-based broadcast scenario, assuming random locations for terrestrial terminals. Babu \textit{et al.} addressed a three-dimensional (3D) arrangement of aerial access points designed for energy-efficient uplink communication, considering intercell interference and the energy consumption of these access points \cite{J:Babu20}. Furthermore, Lai automated the determination of the appropriate number, positions, and altitudes of deployed UAVs to meet dynamic traffic demands within circular footprints \cite{J:Lai20}. Furthermore, Nafees \textit{et al.} in \cite{J:Nafees21}, introduced a UAV-based cellular network design employing multi-tier variable-size circle packing to optimize network coverage. Finally, Qureshi and Imran considered the tradeoff between the circular coverage area and the tilt angle of the base station for UAV communications in \cite{J:Qureshi19}.

Following the explosion of research on wireless communications, there has also been an emphasis on studying similar problems in three dimensions. In \cite{J:Talgat}, Talgat \textit{et al.} focused on the closest neighbor distances for satellites scattered on spherical surfaces following the binomial point process, whereas the performance of uniformly distributed satellites in a low-earth orbit network was considered in \cite{J:Okati}. Moreover, in \cite{J:Vaiopoulos} and \cite{J:Vaiopoulos2}, the authors evaluated a UAV and VLC scenario, accordingly, where an Rx is randomly located within a truncated conical volume. An analysis of underwater optical communications involving a conical spherical volume was also formed in \cite{J:Vaiopoulos4}. The exploration of wireless connectivity in the 3D space for a swarm of UAVs was carried out in \cite{J:Yuan20}, while \cite{J:Armeniakos20} concentrated on a specific scenario involving a finite number of UAVs uniformly distributed within a sphere.

Despite the presence of elliptical coverage areas, as mentioned previously, studies on the effect of the random position of users within ellipses are still lacking. Tracing back to the relevant literature, some mathematical calculations for the elliptical surface were presented in the old days in studies from the former Soviet Union. More precisely, in \cite{J:Geciauskas}, Ge\v{c}iauscas deduced the distribution function of a distance between two random points in the oval, which can be considered as a more generic shape than the ellipse, while in \cite{J:Geciauskas2}, he found the distribution function of a distance between a uniformly distributed point within the oval and a random point on its contour. In \cite{J:Piefke}, Piefke derived the chord length distribution of the ellipse, and in \cite{B:Parry}, Parry delved into the probability that a given distance separates two uniformly distributed points. The angle and time of arrival statistics for an elliptical scattering model were investigated in \cite{J:Ertel}. In \cite{C:Tamo}, an analytical method was proposed to generate uniformly distributed random locations for ground terminals within elliptical beam footprints. In a recent study, the work by \cite{J:Malikov23} focused on determining the smallest ellipse on the ground that encompasses the targets and establishing the corresponding tilted circle in the air. 

\subsection{Motivation}

The adoption of elliptical coverage patterns can be widely observed in several scenarios in typical wireless systems. A practical UAV antenna typically exhibits unequal azimuth and HPBWs in vertical transmissions, resulting in an elliptical disk in the ground plane. Even under ideal conditions where these two angles are identical, a slight tilt of the antenna from the vertical plane introduces an elliptical coverage area. In addition, inherently one-dimensional (1D) railways or highways benefit from elliptical patterns, which align more suitably with the elongated shape of the service area. This choice ensures more efficient coverage, directing communication signals specifically along the highway/railway.  Hence, modeling the user position uncertainty within elliptical footprints becomes essential to evaluate realistic link metrics such as distance distributions, outage probabilities, and throughput. Considering elliptical geometry, we offer a more robust and accurate framework that reflects the operational behavior of airborne communication platforms.   As a result, it is imperative to comprehensively assess the consequences of randomly placing users within this elliptical region to understand its impact on overall performance.

 {This paper is one of the first to systematically explore this issue, presenting analytical results that go beyond traditional circular models and addressing a literature gap in realistic air-to-ground footprint modeling. Unlike prior works that assume circular coverage for simplicity, this study addresses the analytical complexity introduced by elliptical footprints, which lack radial symmetry. This asymmetry complicates the derivation of distance distributions and requires advanced mathematical treatment. We derive closed-form expressions for the probability density function (PDF) and cumulative distribution function (CDF) of both radial and Euclidean distances between a UAV and a uniformly distributed ground user within elliptical coverage areas. These results enable for an exact evaluation of key metrics such as signal-to-noise ratio (SNR) statistics, outage probability, and multiuser (MU) throughput under Nakagami-$m$ fading. Our analysis highlights the impact of footprint shape, antenna tilt, and beamwidth asymmetry on coverage reliability and spectral efficiency, offering a rigorous basis for designing energy-efficient adaptive UAV communication systems with realistic antenna models.

Table~\ref{Table1} highlights the key distinctions between our work and several representative studies in the field. In contrast to prior research that primarily assumes circular footprints or relies on simulation-based evaluations, this work models realistic elliptical coverage resulting from directional antennas with tilt or beamwidth asymmetry. 

\begin{table*}[h]
\caption{Comparison with representative related works}
 
\label{Table1}
\centering
\renewcommand{\arraystretch}{1.2}
\footnotesize
\setlength{\tabcolsep}{4pt} 
\begin{tabular}{>{\raggedright\arraybackslash}p{2.5cm}|
                >{\raggedright\arraybackslash}p{2.5cm}|
                >{\raggedright\arraybackslash}p{2.5cm}|
                >{\raggedright\arraybackslash}p{2.5cm}|
                >{\raggedright\arraybackslash}p{2.5cm}|
                >{\raggedright\arraybackslash}p{2.5cm}}
\hline \hline
\textbf{Reference} & \textbf{Coverage Shape} & \textbf{User Distribution} & \textbf{Tx/Antenna Model} & \textbf{Key Metrics} & \textbf{Analytical Results} \\
\hline
Vaiopoulos \textit{et al.} \cite{J:Vaiopoulos3} & Circular & Uniform & Directional UAV & Outage Probability & Closed-Form Analytical Expression \\ 
Nafees \textit{et al.} \cite{J:Nafees21} & Circle Packing & Uniform & Directional UAV & Coverage Probability, Area Spectral Efficiency (ASE), Rate Coverage & Simulation-Based \\ 
Azari \textit{et al.} \cite{J:Azari} & Circular (Tilted) & Uniform & Directional UAV with Tilt & Coverage Probability, Throughput & Approximate Expressions \\ 
Tamo \textit{et al.} \cite{C:Tamo} & Elliptical & Uniform & Satellite Footprint & Point Generation Only & No Performance Metrics \\ 
\textbf{This Work} & \textbf{Elliptical (Tilted or HPBW-Asymmetric)} & \textbf{Uniform within Ellipse} & \textbf{Directional UAV (Tilted or Unequal HPBWs)} & \textbf{Distance Distributions, SNR, Outage, Throughput} & \textbf{Analytically Tractable Expressions} \\ 
\hline \hline
\end{tabular}
\normalsize
\end{table*}

The above observations are not restricted solely to communications between UAVs and ground users; they apply to any scenario involving an airborne Tx, including satellite communications, high-altitude platform (HAP) communications, HSR communications, and similar contexts. Moreover, exploring applications in VLC or underwater optical links presents exciting prospects. The present work does not comprehensively address all these cases; instead, it serves as an initial effort to elucidate the dynamics of a research area that is still inadequately explored in the existing technical literature.

\subsection{Contribution}
  Although several application studies are mentioned, such as UAV communications, satellite links, and high-speed railways, this paper specifically focuses on UAV-based air-to-ground communications. Consequently, the wireless channel is modeled using the Nakagami-$m$ fading distribution, which offers a flexible and widely accepted representation of small-scale fading in UAV environments.   We investigate two scenarios where the intersection with the ground creates elliptical footprints and randomly placed users are assumed. In summary, the current study provides the following significant advances:

\begin{figure*}[t]
\centering
\subfigure[\scriptsize $x_{0}=0$
]{\includegraphics[width=2in]{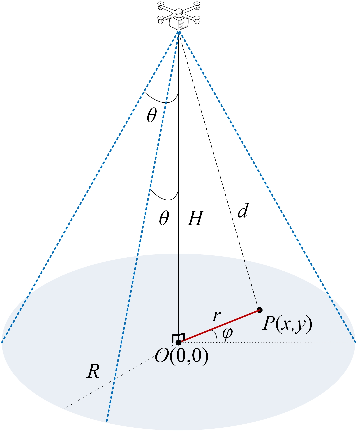}}
\hspace{0.1in} 
\subfigure[\scriptsize $0 < x_{0} \leq a$
]{\includegraphics[width=2.6in]{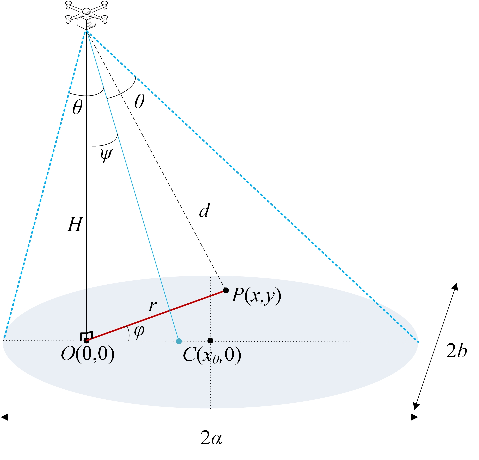}} 
\hspace{0in} 
\subfigure[\scriptsize $x_{0} > a$
]{\includegraphics[width=3.2in]{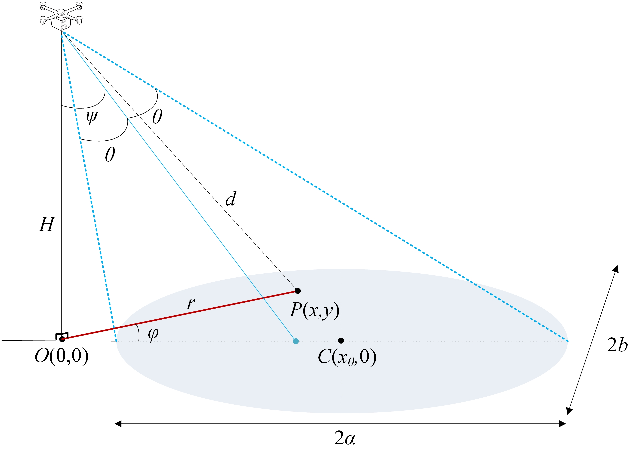}} 
\caption{Possible arrangements for (a) vertical transmission and (b), (c) tilted transmission.}
\label{Figure1}
\end{figure*}

\begin{itemize}
\item \textbf{First scenario: Random Rx location for tilted Tx radiation}: We consider an airborne Tx with an ideal directional antenna radiating downwards at a specified angle to form an elliptical footprint on the ground where a randomly placed user exists. We derive novel closed-form expressions for the PDF and CDF of the random distance between the Tx and the Rx positions in terms of simple and easy-to-evaluate functions.
\item \textbf{Second scenario: Random Rx location for vertical Tx radiation}: We repeat the analysis for an airborne Tx with a practical directional antenna radiating vertically downward to the ground. The expressions provided for the distance metrics offer insight into the statistical properties of the Tx-Rx distance and can be used in the design and analysis of any wireless system with similar characteristics.
\item \textbf{Application in UAV communications}: as a case study in both scenarios, we deduce the SNR statistics and evaluate the outage performance of a UAV-to-ground communication link under the assumption of Nakagami-$m$ fading. We compare the two scenarios and provide a series of figures in a clear and organized manner to highlight the results. An MU scenario is considered, and performance is evaluated against the number of users. The findings presented provide valuable information on the performance of similar wireless systems in fading environments.
\end{itemize}

\subsection{Organization}

The remaining work is structured as follows: Section II presents a concise model for the first scenario, assuming a Tx with a tilted directional antenna with equal azimuth and elevation HPBWs. This section elaborates on the derivations of critical statistical metrics, such as the joint PDF, radial distance, and Euclidean distance distributions for users within the ellipse. Section III explores the second scenario, which involves a Tx with a vertically transmitting directional antenna with unequal azimuth and elevation HPBWs. The same analysis presented in Section II is pursued and valuable outcomes are extracted. Furthermore, assuming Nakagami-$m$ fading, Section IV computes the SNR statistics and assesses outage performance. The two scenarios are compared, and representative analytical and numerical results are exhibited through a series of figures. In addition, a scenario for MU is investigated, and a throughput computation is performed. Finally, Section V summarizes the main findings of the study, discusses their implications, and suggests potential areas for future research in this field.

\textit{Notation}: All random variables will be denoted in bold plain font, e.g., $\mathbf{x}$, while the values of these variables will be in italic font, e.g., $x$. This notation also applies to the PDF and the CDF, e.g., $f_{\mathbf{x}}(x)$, $F_{\mathbf{x}}(x)$.

\subsection{Nomenclature}
In this paper a comprehensive enumeration of the primary symbols used in this study is presented to enhance reader convenience.
\nomenclature{\(a\)}{Major semi-axis of ellipse}
\nomenclature{\(b\)}{Minor semi-axis of ellipse}
\nomenclature{\(x_{0}\)}{Coordinate of the intersection of the ellipse's axes on $x$-axis}
\nomenclature{\(S\)}{Surface of ellipse}
\nomenclature{\(\mathcal{E}\)}{Eccentricity of ellipse}
\nomenclature{\(\theta\)}{Semi-apex angle of UAV's antenna}
\nomenclature{\(\psi\)}{Tilting angle of UAV's antenna}
\nomenclature{\(H\)}{Altitude of UAV}
\nomenclature{\((x,y)\)}{Cartesian coordinates of user position}
\nomenclature{\((r,\varphi)\)}{Polar coordinates of user position}
\nomenclature{\(\nu\)}{Path loss exponent}
\nomenclature{\(m\)}{Parameter of the Nakagami-$m$ distribution}
\nomenclature{\(R\)}{Radius of circular disc}
\nomenclature{\(d\)}{Euclidean distance}
\nomenclature{\(p\)}{Signal power}
\nomenclature{\(\overline{p}\)}{Mean signal power}
\nomenclature{\(P_{t}\)}{Transmitted power}
\nomenclature{\(P_{n}\)}{Noise power}
\nomenclature{\(\gamma\)}{Instantaneous SNR}
\nomenclature{\(\overline{\gamma}\)}{SNR at the Tx}
\nomenclature{\(M\)}{Number of users}
\nomenclature{\(P_{\rm{out}}\)}{Outage probability}
\nomenclature{\(\gamma_{\rm{th}}\)}{SNR threshold}
\nomenclature{\(\overline{C}_{MU}\)}{Throughput}

\printnomenclature

\section{First scenario: Tilted Transmission}
\subsection{Model Description}

Without loss of generality, we consider the wireless link between a hovering UAV and a stationary user with a randomly distributed position, denoted by $P$, within an ellipse, as illustrated in Fig. \ref{Figure1}. The UAV is positioned at a fixed altitude $H$, and is equipped with a directional antenna characterized by HPBWs of equal azimuth and elevation and a constant antenna gain, $G$ \cite{J:He2017}. This configuration results in the formation of a circular disc in the ground plane. This antenna can be tilted to $\psi$ relative to the line perpendicular to the terrain. The midpoint of both axes, $C$, lies on the $x$-axis with Cartesian coordinates $\{x_{0},0\}$, while the coordinates of $P$ are denoted by $\{x,y\}$. The projection of the UAV on the ground is the reference point indicated by $O$. The semi-apex angle $\theta$ and the other geometric parameters, including the radial distance $r$, the Euclidean distance $d$, the polar angle $\varphi$, the major axis 2$a$, and the minor axis 2$b$, are clearly represented. The surface area of an ellipse can be calculated as $S\triangleq\pi ab$ \cite[eq. (3.328a)]{B:Bronshtein}. Finally, the eccentricity of the ellipse, denoted by $\mathcal{E}$, is defined in terms of its axes as $\mathcal{E}\triangleq\sqrt{1-b^2/a^2}$ according to \cite[ch. 3.5.2.8.1]{B:Bronshtein}.

Three possible arrangements can be observed following the information demonstrated in Fig. \ref{Figure1}. The first originates when $\psi = 0^\circ$, resulting in a vertically radiating UAV and a circular footprint. The remaining two are related to the creation of elliptical footprints when the UAV changes its inclination based on whether $O$ lies within or outside the ellipse. In view of the above, we provide a series of derivations to calculate critical statistical metrics that depend on the relative position of $x_{0}$ on the positive semi-major axis of $x$.

\subsection{Dimensions of the Ellipse}
While maintaining a constant value for $2\theta$ and $H$, the dimensions of the ellipse are subject to variation as the value of $\psi$ increases. The dimensions of the major and minor axes of the ellipse can be estimated through the application of a set of mathematical equations, as detailed in \cite{J:Azari}
\begin{equation}
\begin{split}
a\triangleq\dfrac{H\sin(2\theta)}{2(\cos^{2}\psi-\sin^{2}\theta)}, \\
b\triangleq\dfrac{H\sin \theta}{\sqrt{\cos^{2}\psi-\sin^{2}\theta}}.
\end{split}
\label{EllDim}
\end{equation}
Furthermore, $x_{0}$ is determined as $x_{0}=a-H\tan(\theta-\psi)$ and $x_{0}=a+H\tan(\psi-\theta)$ for the arrangements illustrated in Fig. \ref{Figure1}(b) and \ref{Figure1}(c), respectively.

Adjustments in tilt angle directly impact both the coverage area and signal strength. A larger tilt angle increases the footprint size, but reduces the quality of the received signal at the edges due to longer distances and lower directional gain. A smaller tilt angle improves the SNR near the UAV, but limits coverage. This trade-off is critical in the design and deployment of UAV antennas. As discussed in \cite{J:Messaoudi}, optimizing spatial coverage under such constraints supports energy-efficient and delay-aware UAV operations. Our analysis quantifies these effects through outage probability, offering practical guidance for selecting tilt angle.

Figure~\ref{Figure2}(a) illustrates the expansion of the elliptical coverage region with an increasing tilt angle $\psi$, at a fixed UAV altitude of $H = 300$m. In practical UAV deployments, the antenna tilt angle $\psi$ is typically in the range of $0^\circ$ to $45^\circ$, depending on the desired displacement of the footprint and the requirements of ground coverage. The semi-apex angle $\theta$, which is related to the beamwidth of the directional antenna, commonly ranges from $20^\circ$ to $60^\circ$, consistent with realistic high-gain antenna patterns. The aspect ratio of the ellipse, defined as $a/b$, varies accordingly with both $\psi$ and $\theta$, but typically falls within the range of $1$ (circular footprint) to approximately $2.5$, as observed in directional or tilted transmission scenarios (see Figs.~\ref{Figure2}(b)-(d)). These ranges are in alignment with existing studies such as \cite{J:Azari} and \cite{J:He2017}, and are used throughout the numerical evaluations in this work.

For small tilt angles, for example, $\psi = 10^{\circ}$, the coverage area shows limited growth across all values of the semi-apex angle $\theta$. In contrast, for $\psi > 30^{\circ}$, the coverage area increases significantly with $\theta$. For example, at $\theta = 30^{\circ}$, the area grows by a factor of approximately 1.2 at $\psi = 20^{\circ}$, and by about 3.3 at $\psi = 40^{\circ}$. This trend is further visualized in Figs.~\ref{Figure2}(b)–(d), where the elliptical footprint becomes progressively larger and more elongated as $\psi$ increases. These results highlight the strong influence of the tilt angle $\psi$ on elliptical and circular coverage geometries.

\begin{figure}[h]
\centering
\subfigure[\scriptsize $H=300$m
]{\includegraphics[width=0.36\textwidth]{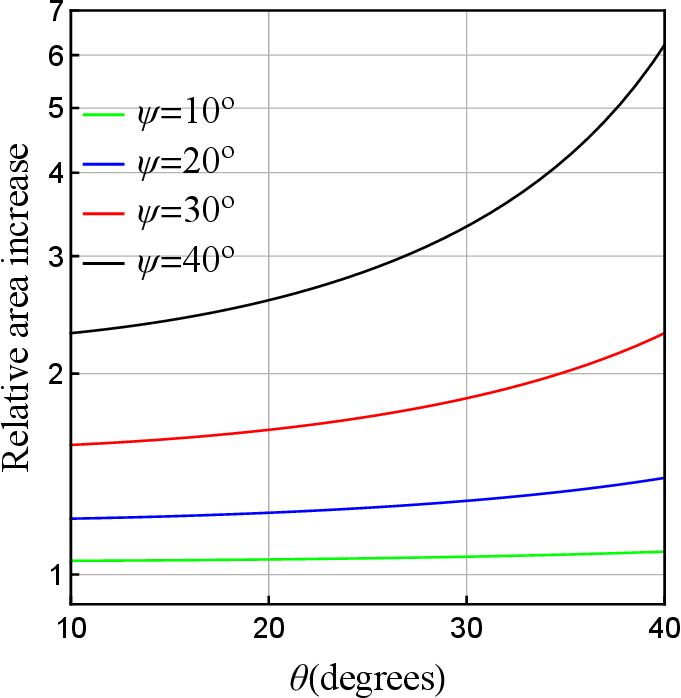}} 
\hspace{0in}
\subfigure[\scriptsize $H=300$m, $\theta=20^{\circ}$
]{\includegraphics[width=0.4\textwidth]{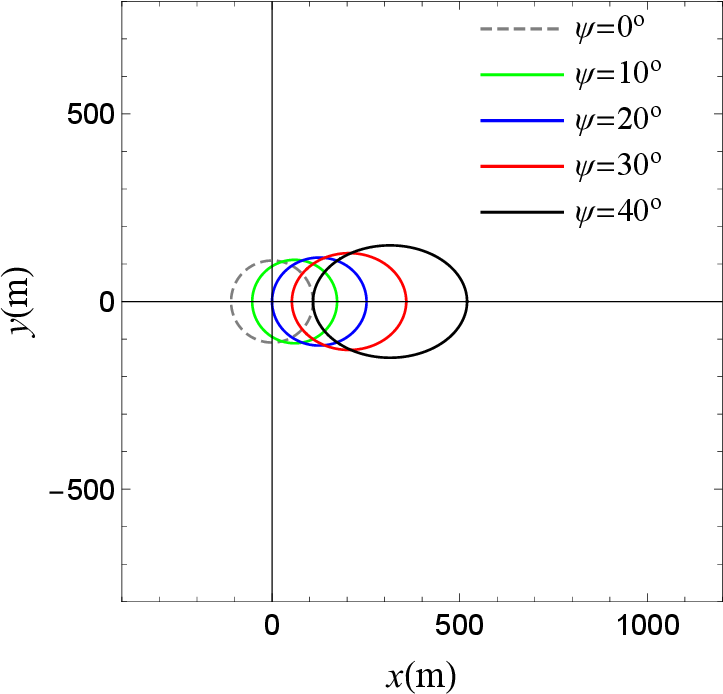}} 
\hspace{0in}\\
\subfigure[\scriptsize $H=300$m, $\theta=30^{\circ}$
]{\includegraphics[width=0.4\textwidth]{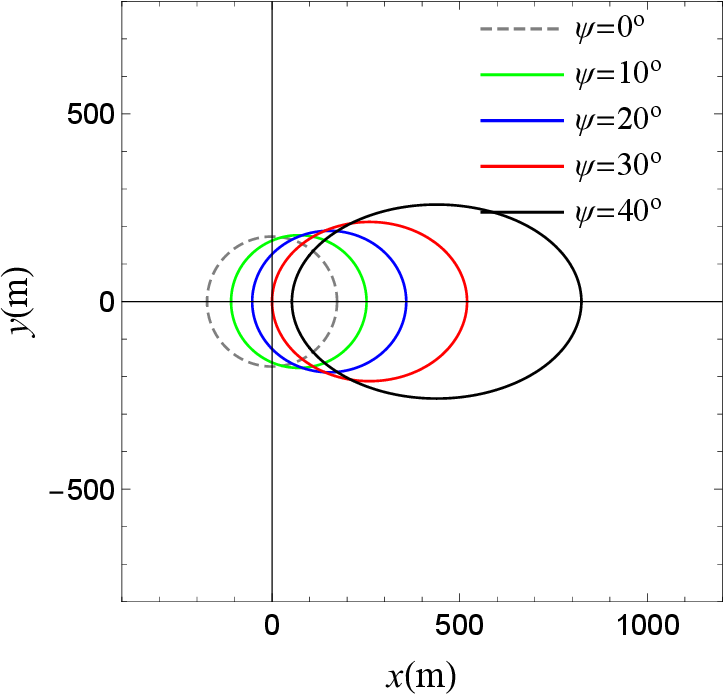}} 
\hspace{0in} 
\subfigure[\scriptsize $H=300$m, $\theta=40^{\circ}$
]{\includegraphics[width=0.4\textwidth]{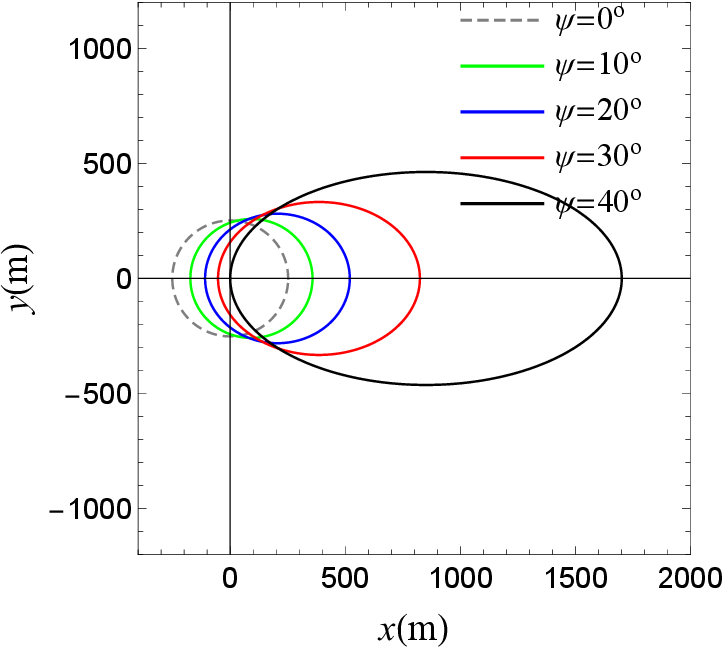}} 
\caption{(a) Relative increase of the elliptical area as a function of $\theta$. (b)–(d) Perimeter of the elliptical area versus $\psi$ for different values of $\theta$.}
\label{Figure2}
\end{figure}

\subsection{Radial Distance Distribution}
\label{Radial2}
The joint PDF of the uniform random location of $P$ inside the elliptical disk is given by
\begin{equation}
f_{\mathbf{x,y}}(x,y) = \left\{\hspace{-5pt}
\begin{array}{ll}
\dfrac{1}{S}=\dfrac{1}{\pi a b}, &
b^{2}x^{2}+a^{2}y^{2}\leq a^{2}b^{2} \\
& \\
0, & \rm{else},
\end{array}
\right. \label{pdfCart}
\end{equation}
Applying the two-dimensional random variable (RV) transformation method to polar coordinates given by $\mathbf{x} = \mathbf{r} \cos \bm{\varphi}, \mathbf{y}= \mathbf{r} \sin \bm{\varphi}$ \cite[eq. (6.115)]{B:Papoulis}, we arrive at
\begin{equation}
\begin{array}{ll}
f_{\mathbf{r},\bm{\varphi}}(r,\varphi)=\dfrac{r}{\pi ab}, & (r,\varphi)\in \left\{\mathcal{D}_{t}\right\}.
\end{array}
\label{jpdf1}
\end{equation}
where the set $\mathcal{D}_{t}$ is defined as the union of discrete subsets, corresponding to the inscribed circle of radius $b$ and the remaining area of the ellipse, respectively, as
\begin{itemize}
\item $0 < x_{0} \leq a$
\begin{eqnarray}
\mathcal{D}_{t} &\triangleq&\left\{ (r,\varphi)\ \middle|\ 
\begin{array}{c}
0\leq r\leq a-x_{0}, \\ 
0\leq \varphi \leq 2\pi
\end{array}
\right\} \cup   \notag \\
&&\left\{ (r,\varphi)\ \middle|\ 
\begin{array}{c}
a-x_{0} < r\leq a+x_{0}, \\ 
2\pi-\varphi(r) \leq \varphi \leq \varphi(r)
\end{array}
\right\},
\end{eqnarray}
\item $x_{0} > a$
\begin{eqnarray}
\mathcal{D}_{t} &\triangleq&\left\{ (r,\varphi)\ \middle|\ 
\begin{array}{c}
x_{0}-a \leq r\leq a+x_{0}, \\ 
2\pi-\varphi(r) \leq \varphi \leq \varphi(r)
\end{array}
\right\}
\end{eqnarray}
\end{itemize}
with the angle $\varphi(r)$ defined below in \eqref{phi1}.
Then, the radial distance distribution, $f_{\mathbf{r}}(r)$, is evaluated by the following theorem,
\begin{theorem}
Based on the joint PDF expression \eqref{jpdf1} and the relative position of $x_{0}$, the radial distance distribution, $f_{\mathbf{r}}(r)$, is extracted as

\item{$\bullet$} $0 < x_{0} \leq a$
\begin{equation}
\hspace{-3pt}f_{\mathbf{r}}(r)\hspace{-2pt}=\hspace{-2pt}\left\{ 
\begin{array}{l}
\hspace{-5pt}\dfrac{2r}{ab},\text{ \ }0\leq r\leq a-x_{0} \\ 
\\ 
\hspace{-5pt}\dfrac{2r\cos ^{-1}\left( \dfrac{k_{1}+\Lambda _{1}(a,b,r)}{\mathcal{E}%
^{2}a^{2}r}\right) }{\pi ab},\text{ }a-x_{0}<r\leq a+x_{0}, \\ 
\end{array}%
\right.   \label{pdfr1a}
\end{equation}

\item{$\bullet$} $x_{0} > a$
\begin{equation}
f_{\mathbf{r}}(r)=\dfrac{2r\cos^{-1}\left(\dfrac{k_{1}+\Lambda_{1}(a,b,r)}{\mathcal{E}^{2}a^{2}r}\right)}{\pi ab}, \hspace{10pt} x_{0}-a<r \leq a+x_{0},
\label{pdfr1b}
\end{equation}\\
where $k_{1}\triangleq-b^{2}x_{0}$ and $k_{2}\triangleq\sqrt{b^{2}(\mathcal{E}^{2}a^{2}-x_{0}^{2})}$ are system-dependent constants, while $\Lambda_{1}(a,b,x)\triangleq a\sqrt{\mathcal{E}^{2}a^{2}x^{2}-k_{2}^{2}}$, is a user-defined function\footnote{$\Lambda_{1}(a,b,x)$, lacks explicit physical interpretation but offers a more concise representation of the final expressions, e.g., (\ref{pdfr1a}) or (\ref{pdfr1b}).}.
\end{theorem}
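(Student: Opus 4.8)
The plan is to obtain $f_r(r)$ as the marginal of the joint density \eqref{jpdf1} by integrating out $\varphi$, which reduces the whole statement to an elementary piece of plane geometry. Since $f_{r,\varphi}(r,\varphi)=r/(\pi ab)$ is independent of $\varphi$, we have $f_r(r)=\frac{r}{\pi ab}\,\mu(r)$, where $\mu(r)$ is the length of the $\varphi$-section of $\mathcal{D}_{t}$ at radius $r$, i.e.\ the total angular measure of the circle of radius $r$ centred at $O$ that lies inside the ellipse. Hence the only real work is computing $\mu(r)$ and pinning down the two $r$-regimes.

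I would first dispose of the regime where the whole circle about $O$ lies in the ellipse: there $\mu(r)=2\pi$ and $f_r(r)=2r/(ab)$, which is the first branch of \eqref{pdfr1a}. A one-line optimisation over the ellipse boundary shows the point nearest $O$ is the near vertex $(x_0-a,0)$ precisely when $x_0\ge a\mathcal{E}^2$; for the arrangement of Fig.~\ref{Figure1}(b) this holds, because $x_0=a-H\tan(\theta-\psi)$ and $b^2/a=H\tan\theta$ (from \eqref{EllDim}) give $x_0\ge a\mathcal{E}^2\iff\tan(\theta-\psi)\le\tan\theta\iff\psi\ge0$. So the threshold radius is exactly $a-x_0$. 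When $x_0>a$ the point $O$ is outside the ellipse, there is no such regime, and $r$ is confined to $[x_0-a,\,a+x_0]$, the distances from $O$ to the near and far vertices.

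For the crossing regime, the reflection symmetry $y\mapsto-y$ forces the in-ellipse part of the circle to be the arc symmetric about the positive $x$-axis with some half-angle $\varphi(r)$, so $\mu(r)=2\varphi(r)$ — this is exactly the wrap-around interval $2\pi-\varphi(r)\le\varphi\le\varphi(r)$ in the definition of $\mathcal{D}_{t}$. To get $\varphi(r)$ I would intersect $x^2+y^2=r^2$ with $b^2(x-x_0)^2+a^2y^2=a^2b^2$; eliminating $y^2$ gives a quadratic in $x$, and with $b^2-a^2=-a^2\mathcal{E}^2$ and the identity $b^2+a^2\mathcal{E}^2=a^2$ its discriminant collapses to $a^2(\mathcal{E}^2a^2r^2-k_2^2)$, so the roots are $x_{\pm}=(k_1\pm\Lambda_1(a,b,r))/(\mathcal{E}^2a^2)$. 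One root is extraneous (it violates $|x|\le r$, equivalently yields $y^2<0$); the admissible one is $x_{+}$, and since $\cos\varphi(r)=x_{+}/r$ we obtain $\varphi(r)=\cos^{-1}\!\bigl((k_1+\Lambda_1(a,b,r))/(\mathcal{E}^2a^2r)\bigr)$. Plugging $\mu(r)=2\varphi(r)$ into $f_r(r)=\frac{r}{\pi ab}\mu(r)$ yields the second branch of \eqref{pdfr1a} and, on its own $r$-range, \eqref{pdfr1b}; one can then check $\int f_r(r)\,dr=1$ as a consistency test.

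The main obstacle is bookkeeping rather than any deep step: spotting the simplification $b^2+a^2\mathcal{E}^2=a^2$ (without which the discriminant, and hence the answer, does not close into the stated form), correctly discarding the spurious quadratic root, and keeping the factor $2$ and the $[0,2\pi)$ wrap-around of the angular section in step with the definition of $\mathcal{D}_{t}$. The secondary care point is justifying the endpoints of each $r$-interval — that the circle about $O$ is tangent to the ellipse exactly at $r=a-x_0$ and $r=a+x_0$ (resp.\ $r=x_0-a$) — which is handled by the nearest/farthest-vertex argument above.
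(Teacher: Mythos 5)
Your proposal is correct and follows essentially the same route as the paper: integrate the uniform joint density \eqref{jpdf1} over the angular section at radius $r$, obtaining $\varphi(r)$ from the circle--ellipse intersection (your Cartesian elimination is the same computation as the paper's ``solve the polar form for $\varphi$''), with the full-circle branch giving $2r/(ab)$. You actually supply details the paper omits --- the discriminant collapse via $b^{2}+a^{2}\mathcal{E}^{2}=a^{2}$, the rejection of the spurious root, and the verification that $x_{0}\geq a\mathcal{E}^{2}$ so that the first branch really ends at $r=a-x_{0}$ --- so no gaps.
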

\begin{proof}
The general equation of the ellipse in polar form is as \cite[ch. 3.5.2]{B:Bronshtein}
\begin{equation}
\label{polar}
\dfrac{(r\cos\varphi-x_{0})^2}{a^2}-\dfrac{(r\sin\varphi)^2}{b^2}=1.
\end{equation}
Solving \eqref{polar} for $\varphi$, we get
\begin{equation}
\label{phi1}
\varphi(r)=\cos^{-1}\left(\dfrac{k_{1} + \Lambda_1(a,b,r)}{\mathcal{E}^2a^2r} \right).
\end{equation}
The radial distance distribution is obtained by integrating \eqref{jpdf1} over $\varphi(r)$. 
\begin{figure}[h]
\centering
\subfigure[\scriptsize $0 < x_{0} \leq a$
]{\includegraphics[width=4in]{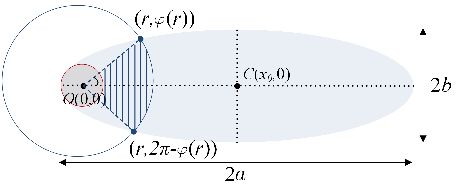}} 
\hspace{0.1in} 
\subfigure[\scriptsize $x_{0} > a$
]{\includegraphics[width=4in]{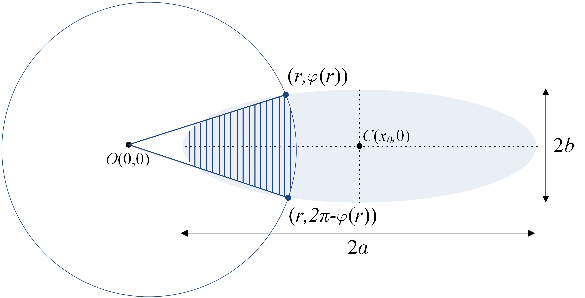}} 
\caption{Integration areas used in \eqref{jpdf1} to derive \eqref{pdfr1a} or \eqref{pdfr1b}.}
\label{Figure3}
\end{figure}
We differentiate between two cases based on the location of $x_{0}$ on the positive semi-axis $x$. This results in a varying number of intersections between the inscribed circle (centered at $(0,0)$) and the ellipse, as shown in Fig. \ref{Figure3}. In the first case, the inscribed circle intersects the ellipse at two points, and the integration over the polar angle is carried out in separate branches. The first branch applies where the inscribed circle exists and the integration is performed between 0 and 2$\pi$. In the second case, there is no intersection, and the integration is defined between the angles obtained from \eqref{phi1}.
\end{proof}

\subsection{Euclidean Distance Statistics}
In the sequel, the statistical description of the RV $\mathbf{d}$ becomes as follows.
\begin{theorem}
The PDF of the Euclidean distance $\mathbf{d}$  is given by

\item {$\bullet $} $0<x_{0}\leq a$ 
\begin{equation}
\hspace{-5pt}f_{\mathbf{d} }(d)=\left\{ 
\begin{array}{l}
\dfrac{2d}{ab},\text{ \ \ \  }d_{\mathrm{min}}\leq d\leq d_{(1),t} \\ 
\\ 
\dfrac{2d\cos ^{-1}\left( \dfrac{k_{1}+\Lambda _{1}(a,b,\sqrt{d^{2}-H^{2}})}{%
\mathcal{E}^{2}a^{2}\sqrt{d^{2}-H^{2}}}\right) }{\pi ab},\text{ } \\ 
\text{ \  }d_{(1),t}<d\leq d_{\mathrm{{max},t}}%
\end{array}%
\right.   \label{pdfd1a}
\end{equation}

\item {$\bullet $} $x_{0}>a$ 
\begin{eqnarray}
\hspace{-5pt}f_{\mathbf{d} }(d) \hspace{-5pt}&=&\hspace{-5pt}\dfrac{2d\cos ^{-1}\left( \dfrac{k_{1}+\Lambda _{1}(a,b,\sqrt{%
d^{2}-H^{2}})}{\mathcal{E}^{2}a^{2}\sqrt{d^{2}-H^{2}}}\right) }{\pi ab}, \notag \\
\hspace{10pt} &&d_{(2),t}<d\leq d_{\mathrm{{max},t}} \label{pdfd1b}
\end{eqnarray}%
where $d_{\mathrm{min}}\triangleq H$, $d_{(1),t}\triangleq \sqrt{%
(a-x_{0})^{2}+H^{2}}$, $d_{(2),t}\triangleq \sqrt{(x_{0}-a)^{2}+H^{2}}$, $d_{%
\mathrm{{max},t}}\triangleq \sqrt{(a+x_{0})^{2}+H^{2}}$ are system-dependent
constants.
\end{theorem}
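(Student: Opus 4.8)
The plan is to obtain $f_{d}(d)$ from the radial density $f_{r}(r)$ established in the preceding theorem by a one–dimensional change of variables, exploiting the fact that the UAV lies directly above the reference point $O$ at height $H$. Consequently the Euclidean distance and the ground radial distance are tied by the Pythagorean relation $d=\sqrt{r^{2}+H^{2}}$, uniformly in the polar angle $\varphi$, so no joint treatment with $\varphi$ is needed and the problem reduces to a scalar transformation of $r$.

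First I would note that the map $g:r\mapsto\sqrt{r^{2}+H^{2}}$ is continuous and strictly increasing on $[0,\infty)$, hence a bijection onto $[H,\infty)$ with inverse $r=g^{-1}(d)=\sqrt{d^{2}-H^{2}}$ and $\dfrac{dr}{dd}=\dfrac{d}{\sqrt{d^{2}-H^{2}}}$. The monotone transformation rule \cite{B:Papoulis} then yields $f_{d}(d)=f_{r}\!\left(\sqrt{d^{2}-H^{2}}\right)\dfrac{d}{\sqrt{d^{2}-H^{2}}}$ for $d>H$ and $f_{d}(d)=0$ otherwise. Substituting the piecewise form of $f_{r}$, the factor $r=\sqrt{d^{2}-H^{2}}$ carried by $f_{r}$ on every branch cancels the $1/\sqrt{d^{2}-H^{2}}$ coming from the Jacobian, leaving the prefactor $2d/(ab)$ on the inner–disk branch and $2d\cos^{-1}(\cdot)/(\pi ab)$ on the outer branch, with each occurrence of $r$ inside $\Lambda_{1}$ and inside the arccosine replaced by $\sqrt{d^{2}-H^{2}}$ — precisely the expressions in \eqref{pdfd1a} and \eqref{pdfd1b}.

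It then remains to push the endpoints of the $r$–intervals through $g$: $r=0\mapsto d=H=d_{\mathrm{min}}$; $r=a-x_{0}\mapsto d=\sqrt{(a-x_{0})^{2}+H^{2}}=d_{(1),t}$; $r=x_{0}-a\mapsto d=\sqrt{(x_{0}-a)^{2}+H^{2}}=d_{(2),t}$; and $r=a+x_{0}\mapsto d=\sqrt{(a+x_{0})^{2}+H^{2}}=d_{\mathrm{max},t}$, which reproduces the stated supports for both the $0<x_{0}\le a$ case (two branches) and the $x_{0}>a$ case (single branch). There is no genuine obstacle in this argument; the only points meriting an explicit line are verifying that $g$ is strictly monotone over the whole support of $r$ — so that a single preimage suffices and no sum over branches of $g^{-1}$ arises — and confirming that the branch boundaries map over in the correct order, which is immediate from $0\le a-x_{0}<a+x_{0}$ (respectively $0\le x_{0}-a<a+x_{0}$). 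One may also remark, as a consistency check, that integrating the resulting $f_{d}$ over $[H,d_{\mathrm{max},t}]$ returns $1$, since the transformation preserves total probability.
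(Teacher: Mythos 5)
Your proposal is correct and follows essentially the same route as the paper: both apply the Pythagorean relation $d=\sqrt{r^{2}+H^{2}}$ (valid because the UAV sits at height $H$ directly above the reference point $O$ from which $r$ is measured) and the monotone one-dimensional transformation rule of Papoulis to obtain $f_{d}(d)=\tfrac{d}{\sqrt{d^{2}-H^{2}}}f_{r}(\sqrt{d^{2}-H^{2}})$, then substitute the piecewise radial PDF and map the interval endpoints. Your added remarks on strict monotonicity of the map, the ordering of the branch boundaries, and the normalization check are sound but not needed beyond what the paper states.
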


\begin{proof}
The Euclidean distance, $d$, is calculated using the Pythagorean theorem $d=\sqrt{r^2+H^2}$. 
The PDF of $\mathbf{d} $ results from \eqref{pdfr1a} and \eqref{pdfr1b} by applying the RV transformation method described in \cite[eq. (5.16)]{B:Papoulis} as
\begin{equation}
f_{\mathbf{d} }(d)=\dfrac{d\cdot f_{\mathbf{r} }(\sqrt{d^2-H^2})}{\sqrt{d^2-H^2}}.
\label{varchd}
\end{equation}
\end{proof}

\begin{figure}[h]
\centering
\subfigure[\scriptsize $\psi=10^{\circ}$
]{\includegraphics[width=0.4\textwidth]{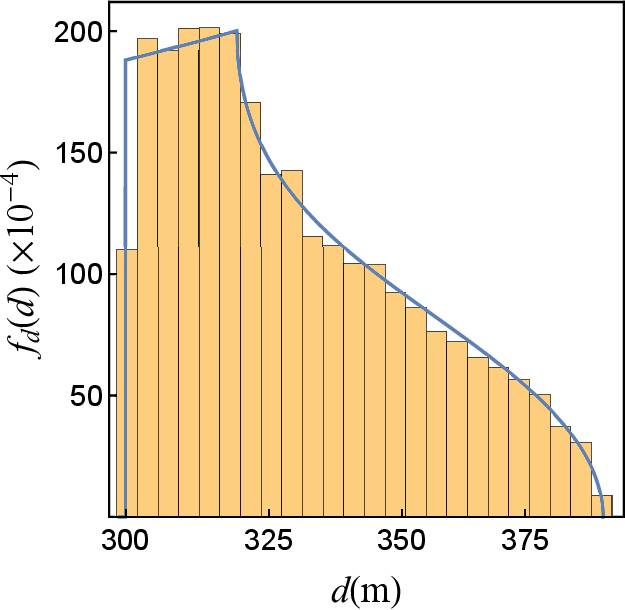}} 
\hspace{0in} 
\subfigure[\scriptsize $\psi=20^{\circ}$
]{\includegraphics[width=0.4\textwidth]{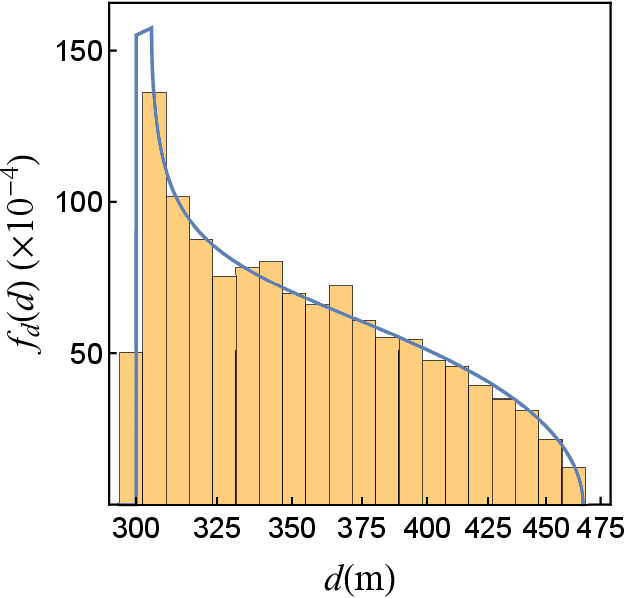}} 
\hspace{0in} 
\subfigure[\scriptsize $\psi=30^{\circ}$
]{\includegraphics[width=0.4\textwidth]{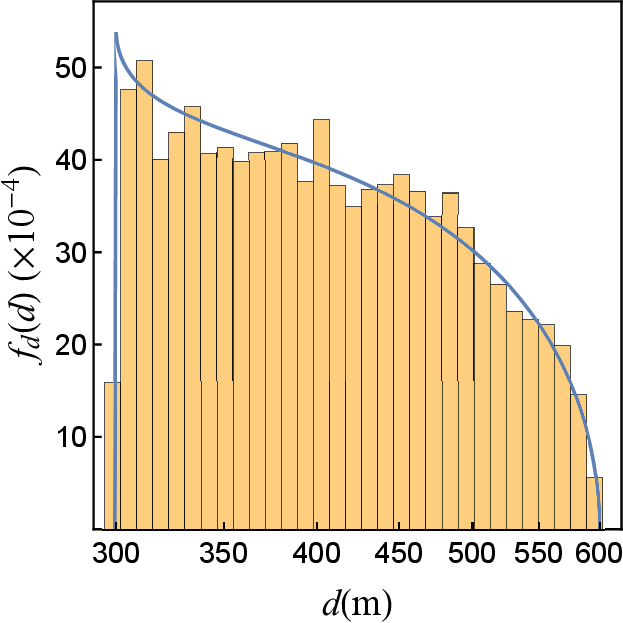}} 
\hspace{0in} 
\subfigure[\scriptsize $\psi=40^{\circ}$
]{\includegraphics[width=0.4\textwidth]{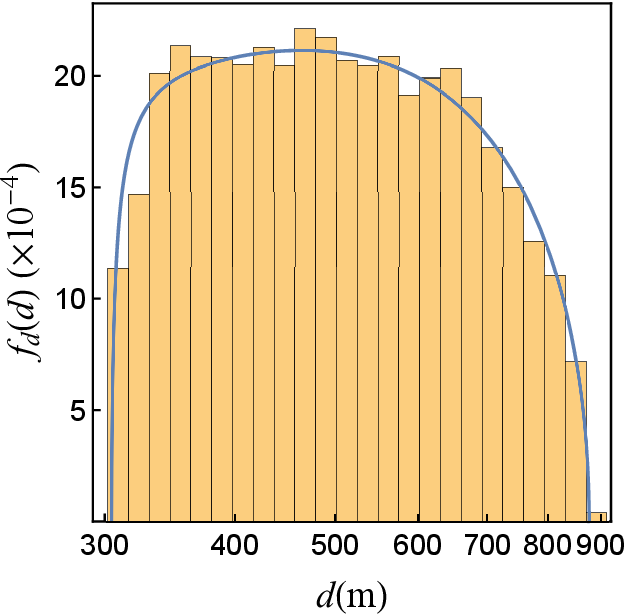}} 
\caption{PDF of $\mathbf{d}$ for typical $\psi$ values ($H=300$m and $\theta=30^{\circ}$): Analytical solutions (solid line) vs. Monte-Carlo simulation (histogram).}
\label{Figure4}
\end{figure}

Figure \ref{Figure4} offers a graphical representation of the PDF of $\mathbf{d} $ for different $\psi$ values, assuming $H=300$m and $\theta=30^\circ$. The solid lines are based on theoretical calculations (according to \eqref{pdfd1a}, \eqref{pdfd1b}) and have been verified through simulation results (histogram). These simulations validate the correctness of the analytical curves, giving confidence in the accuracy of the theoretical results.

\begin{corollary}
The CDF of the distance $\mathbf{d} $ is given by

\item{$\bullet$} $0 < x_{0} \leq a$
\begin{equation}
F_{\mathbf{d} }(d)=\left\{ 
\begin{array}{ll}
\dfrac{d^{2}-H^{2}}{ab}, & d_{\mathrm{min}}\leq d\leq d_{(1),t} \\ 
&  \\ 
\begin{array}{l}
\dfrac{1}{2}+\dfrac{d^{2}-H^{2}}{ab} \\ 
+\Lambda _{2}(a,b,\sqrt{d^{2}-H_{2}}),%
\end{array}
& d_{(1),t}<d\leq d_{\mathrm{{max},t}} \\ 
& 
\end{array}%
\right.   \label{cdfd1a}
\end{equation}

\item{$\bullet$} $x_{0} > a$
\begin{eqnarray}
F_{\mathbf{d} }(d) &=&\dfrac{1}{2}+\dfrac{d^{2}-H^{2}}{ab}+\Lambda _{2}(a,b,\sqrt{%
d^{2}-H_{2}}),\hspace{10pt} \notag \\
&&d_{(2),t}<d\leq d_{\mathrm{{max},t}}  \label{cdfd1b}
\end{eqnarray}%
where
\begin{align*}
\Lambda _{2}(a,b,x)\triangleq & \dfrac{x^{2}\cos ^{-1}\left( \frac{%
k_{1}+\Lambda _{1}(a,b,x)}{\mathcal{E}^{2}a^{2}x}\right) }{\pi ab} \\
& -\dfrac{k_{5}+k_{6}x^{2}-2k_{7}\Lambda _{1}(a,b,x)+}{\pi abk_{10}x\sqrt{1+%
\frac{\left( k_{1}-\Lambda _{1}(a,b,x)\right) ^{2}}{\mathcal{E}^{2}a^{2}x}}}
\\
& -\dfrac{k_{8}\Lambda _{3}(a,b,x)\tan ^{-1}\left( \frac{-k_{1}\mathcal{E}%
^{2}a^{4}-k_{4}\Lambda _{1}(a,b,x)}{k_{9}\Lambda _{3}(a,b,x)}\right) }{\pi
abk_{10}x\sqrt{1+\frac{\left( k_{1}-\Lambda _{1}(a,b,x)\right) ^{2}}{%
\mathcal{E}^{2}a^{2}x}}},
\end{align*}%
$\Lambda_{3}(a,b,x)\triangleq\sqrt{k_{3}+k_{4}x^{2}-2k_{1}\Lambda_{1}(a,b,x)}$
are user-defined functions, while $k_{3}\triangleq ak_{2}^{2}-k_{1}^{2}$, $k_{4}\triangleq \mathcal{E}^{2}a^{4}(1-{\mathcal E}^2)$, $k_{5}\triangleq -a^{6}\mathcal{E}^{3}(1-\mathcal E^2)k_{1}k_{3}$, $k_{6}\triangleq -k_{1}a^{12}\mathcal{E}^{5}(1-\mathcal E^2)^{2}$, $k_{7}\triangleq a^{6}\mathcal{E}^{3}(1-\mathcal E^2)k_{1}$, $k_{8}\triangleq -a^{6}\mathcal{E}^{3}\sqrt{1-{\mathcal E}^2}((1-\mathcal E^2)a^{2}k_{2}^{2}+k_{1}^{2})$, $k_{9}\triangleq a^{4}\mathcal{E}^{2}\sqrt{1-{\mathcal E}^2}$, and $k_{10}\triangleq -a^{12}\mathcal{E}^{7}(1-\mathcal E^2)^{2}$ are system-dependent constants.
\end{corollary}

\begin{proof}
The CDF results by integrating \eqref{pdfd1a} and \eqref{pdfd1b} in terms of $d$.
\end{proof}

\section{Second scenario: Vertical Transmission}
\subsection{Model Description}

We consider the configuration in Fig. \ref{Figure5} illustrating an air-to-ground link between a Tx with a directional antenna and a uniformly distributed user at a point, $P$ with Cartesian coordinates $\{x,y\}$, within the elliptical area and a Tx at a certain altitude. The HPBWs of the azimuth and elevation antennas are not equal, thus forming an elliptical coverage area \cite{J:Oubbati}, \cite{J:Jeon}. The reference point for this area is identified as the midpoint of the two axes, denoted by $O$, which coincides with the projection of the UAV on the ground. The remaining geometric parameters -- including altitude $H$, radial distance $r$, Euclidean distance $d$, polar angle $\varphi$, major axis 2$a$, and minor axis 2$b$ -- are adequately highlighted.
\begin{figure}[h]
\centering
\includegraphics[keepaspectratio,width=4in]{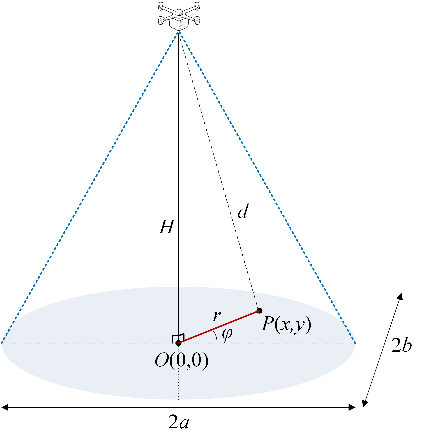}
\caption{Model configuration.}
\label{Figure5}
\end{figure}

\subsection{Radial Distance Distribution}
The joint PDF of the uniform random location of $P$ inside the elliptical disk is obtained as previously.

\begin{equation}
\begin{array}{ll}
f_{\mathbf{r},\bm{\varphi} }(r,\varphi)=\dfrac{r}{\pi ab}, & (r,\varphi)\in \left\{\mathcal{D}_{\rm{v}}\right\},
\end{array}
\label{jpdf2}
\end{equation}
where the set $\mathcal{D}_{\rm{v}}$ is defined as the union of three discrete subsets, corresponding to the circle of radius inscribed $b$ and the remaining area of the ellipse, respectively.
\begin{eqnarray}
\mathcal{D}_{\rm{v}} &\triangleq&\left\{ (r,\varphi)\ \middle|\ 
\begin{array}{c}
0\leq r\leq b, \\ 
0\leq \varphi \leq 2\pi
\end{array}
\right\} \cup   \notag \\
&&\left\{ (r,\varphi)\ \middle|\ 
\begin{array}{c}
b < r\leq a, \\ 
2\pi-\varphi(r) \leq \varphi \leq \varphi(r)
\end{array}
\right\} \cup
\notag \\
&&\left\{ (r,\varphi)\ \middle|\ 
\begin{array}{c}
b < r\leq a, \\ 
\pi-\varphi(r) \leq \varphi \leq \pi+\varphi(r)
\end{array}
\right\},
\end{eqnarray}
with the angle $\varphi(r)$ defined below in \eqref{phi2}. The radial distance distribution is extracted according to the following theorem.

\begin{theorem}
Based on the joint PDF expression \eqref{jpdf2} the radial distance distribution, $f_{\mathbf{r}}(r)$, is extracted as

\begin{equation}
f_{\mathbf{r}}(r)=\left\{
\begin{array}{ll}
\dfrac{2r}{ab}, & 0\leq r \leq b \\
& \\
\dfrac{4r\cos^{-1}\left(\dfrac{\sqrt{r^{2}-b^{2}}}{\mathcal{E}r}\right)}{\pi ab}, & b<r \leq a. \\
\end{array}
\right. \label{pdfr2}
\end{equation}
\end{theorem}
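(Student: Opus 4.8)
The plan is to obtain $f_r(r)$ by marginalising the joint density \eqref{jpdf2} over the polar angle $\varphi$ — the same route as in the tilted case, but now simplified by the reflection symmetry of the ellipse about both coordinate axes. The first step is to put the ellipse, which is here centred at the origin with equation $x^2/a^2 + y^2/b^2 = 1$, into polar form: substituting $x = r\cos\varphi$, $y = r\sin\varphi$ into the boundary equation $r^2\cos^2\varphi/a^2 + r^2\sin^2\varphi/b^2 = 1$ and solving for $\varphi$, and using $b^2 = a^2(1-\mathcal{E}^2)$ so that $a^2 - b^2 = \mathcal{E}^2 a^2$, the condition collapses to $\cos^2\varphi = (r^2-b^2)/(\mathcal{E}^2 r^2)$. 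Hence the relevant half-angle is $\varphi(r) = \cos^{-1}\!\bigl(\sqrt{r^2-b^2}/(\mathcal{E}r)\bigr)$, the analogue of \eqref{phi1} (indeed it also follows from \eqref{phi1} by setting $x_{0}=0$), and it is real precisely for $b \le r \le a$.

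Next I would split into the two regimes encoded in $\mathcal{D}_{\mathrm v}$. For $0 \le r \le b$ the whole circle of radius $r$ is inscribed in the ellipse, so $\varphi$ ranges over $[0,2\pi)$ and $f_r(r) = \int_0^{2\pi} \frac{r}{\pi ab}\,d\varphi = \frac{2r}{ab}$. For $b < r \le a$ the circle of radius $r$ crosses the ellipse boundary, and by the symmetry about the $x$- and $y$-axes the part of this circle lying inside the ellipse consists of two congruent arcs: one straddling the positive major semi-axis, $\varphi \in [2\pi-\varphi(r),2\pi)\cup[0,\varphi(r)]$, and its mirror straddling the negative major semi-axis, $\varphi \in [\pi-\varphi(r),\pi+\varphi(r)]$, each of angular length $2\varphi(r)$. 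Integrating the constant angular density $r/(\pi ab)$ over the total support $4\varphi(r)$ then gives $f_r(r) = \frac{4r}{\pi ab}\cos^{-1}\!\bigl(\sqrt{r^2-b^2}/(\mathcal{E}r)\bigr)$, which is \eqref{pdfr2}. As consistency checks I would verify that the two branches agree at $r=b$ — there $\varphi(b) = \cos^{-1}0 = \pi/2$, so the second branch returns $2b/(ab)$ — that $\varphi(a) = 0$ makes the density vanish at $r=a$, and that $\int_0^a f_r(r)\,dr = 1$.

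I do not expect any genuine obstacle here; the one point that needs care is the angular bookkeeping — recognising that for $b<r\le a$ the admissible angular set now has \emph{two} lobes rather than one, so the marginal picks up a factor $4 = 2\times 2\varphi(r)$ instead of the factor $2$ appearing in the single-lobe tilted configuration, and being careful to count the wrap-around interval about $\varphi = 0$ exactly once, with measure $2\varphi(r)$.
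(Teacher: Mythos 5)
Your proposal is correct and follows essentially the same route as the paper: setting $x_{0}=0$ in the polar form of the ellipse to obtain $\varphi(r)=\cos^{-1}\bigl(\sqrt{r^{2}-b^{2}}/(\mathcal{E}r)\bigr)$ and integrating the joint PDF branch-wise, with the full $2\pi$ range for $0\le r\le b$ and the two symmetric lobes of total angular measure $4\varphi(r)$ for $b<r\le a$. Your explicit bookkeeping of the two-lobe factor of $4$ and the continuity check at $r=b$ make the argument, if anything, more complete than the paper's sketch.
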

\begin{proof}
Setting $x_{0}=0$ in \eqref{polar} and solving for $\varphi$, we get
\begin{equation}
\label{phi2}
\varphi(r)=\cos^{-1}\left(\dfrac{\sqrt{r^2-b^2}}{\mathcal{E} r} \right).
\end{equation}
The distribution of radial distance occurs by integrating \eqref{jpdf2} over $\varphi(r)$. A snapshot is displayed in Fig. \ref{Figure6}. It should be clarified that the integration of \eqref{jpdf2} is branch-wise. The first branch holds for the inscribed circle, where the integration over the polar angle takes place between $0$ and 2$\pi$.  In the other case, the integration areas, depicted as the shaded regions in Fig. \ref{Figure6}, are defined between the highlighted angles, which are determined on the basis of \eqref{phi2}. 

\begin{figure}[h]
\centering
\includegraphics[keepaspectratio,width=4in]{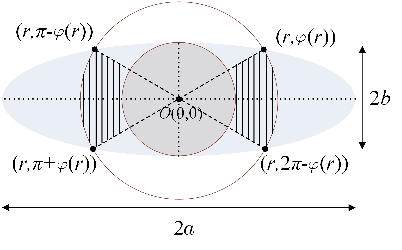}
\caption{Integration areas used in \eqref{jpdf2} to derive \eqref{pdfr2}.}
\label{Figure6}
\end{figure}
\end{proof}

It should be noted that as the ellipsoidal footprint approaches a circular one, \eqref{pdfr2} converges to the well-established expression of $f_{\mathbf{r}}(r) = 2r/R^2$, with $R$ being the circular disc radius. 

\subsection{Euclidean Distance Statistics}

\begin{theorem}
The PDF of the Euclidean distance $\mathbf{d}$ is given by
\begin{equation}
f_{\mathbf{d}}(d)\hspace{-3pt}=\hspace{-3pt}\left\{
\begin{array}{ll}
\hspace{-5pt}\dfrac{2 d}{a b}, & \hspace{-7pt}d_{\rm{min}}\leq d\leq d_{(1),{\rm{v}}} \\
& \\
\hspace{-5pt}\dfrac{4 d\cos ^{-1}\left(\sqrt{\dfrac{d^2-d_{(1),{\rm{v}}}^2}{\mathcal{E}^2   \left(d^2-H^2\right)}}\right)}{\pi  a b}, & \hspace{-7pt}d_{(1),{\rm{v}}}< d\leq d_{\rm{max},v}. \\
\end{array}
\right. \label{pdfd2}
\end{equation}
where $d_{(1),{\rm{v}}}\triangleq \sqrt{b^2+H^2}$ and $d_{\rm{max},v}\triangleq \sqrt{a^2+H^2}$ are system-dependent constants.
\end{theorem}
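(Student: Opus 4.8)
The plan is to obtain $f_d(d)$ directly from the radial distance distribution $f_r(r)$ in \eqref{pdfr2} via the monotone change of variables $d=\sqrt{r^2+H^2}$, exactly as was done in the proof of Theorem~2 for the first scenario. Since $H>0$ is fixed and $r\mapsto\sqrt{r^2+H^2}$ is strictly increasing on $r\ge 0$, the transformation is invertible with $r=\sqrt{d^2-H^2}$ and $dr/dd = d/\sqrt{d^2-H^2}$, so the standard RV-transformation rule \cite[eq. (5.16)]{B:Papoulis} gives
\begin{equation}
f_d(d)=\frac{d}{\sqrt{d^2-H^2}}\,f_r\!\left(\sqrt{d^2-H^2}\right).
\label{plan:varchd2}
\end{equation}

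First I would substitute the two branches of \eqref{pdfr2} into \eqref{plan:varchd2}. On the inner branch $0\le r\le b$, we have $f_r(r)=2r/(ab)$, which yields $f_d(d)=\frac{d}{\sqrt{d^2-H^2}}\cdot\frac{2\sqrt{d^2-H^2}}{ab}=\frac{2d}{ab}$; the range $0\le r\le b$ maps to $H\le d\le\sqrt{b^2+H^2}$, i.e.\ $d_{\min}\le d\le d_{(1),\mathrm{v}}$. On the outer branch $b<r\le a$, we have $f_r(r)=\frac{4r}{\pi ab}\cos^{-1}\!\big(\sqrt{r^2-b^2}/(\mathcal{E}r)\big)$, so \eqref{plan:varchd2} gives $f_d(d)=\frac{4d}{\pi ab}\cos^{-1}\!\big(\sqrt{r^2-b^2}/(\mathcal{E}r)\big)$ with $r^2=d^2-H^2$; the range $b<r\le a$ maps to $\sqrt{b^2+H^2}<d\le\sqrt{a^2+H^2}$, i.e.\ $d_{(1),\mathrm{v}}<d\le d_{\max,\mathrm{v}}$.

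The only remaining work is a cosmetic rewriting of the $\cos^{-1}$ argument to match the stated form $\sqrt{(d^2-d_{(1),\mathrm{v}}^2)/(\mathcal{E}^2(d^2-H^2))}$. Substituting $r^2=d^2-H^2$ into $\sqrt{r^2-b^2}/(\mathcal{E}r)$ gives $\sqrt{d^2-H^2-b^2}/(\mathcal{E}\sqrt{d^2-H^2})$, and since $d_{(1),\mathrm{v}}^2=b^2+H^2$ we have $d^2-H^2-b^2=d^2-d_{(1),\mathrm{v}}^2$; pulling the denominator under the radical yields $\sqrt{(d^2-d_{(1),\mathrm{v}}^2)/(\mathcal{E}^2(d^2-H^2))}$, which is \eqref{pdfd2}. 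I do not expect any genuine obstacle here: the map is globally monotone so there is no need to sum over multiple pre-images, and the factor $d/\sqrt{d^2-H^2}$ cancels cleanly against the $r$ appearing in both branches of $f_r$. The one point requiring a little care is bookkeeping the interval endpoints so that the breakpoint $r=b$ is correctly transported to $d=d_{(1),\mathrm{v}}=\sqrt{b^2+H^2}$ and the outer endpoint $r=a$ to $d_{\max,\mathrm{v}}=\sqrt{a^2+H^2}$, which is immediate from monotonicity.
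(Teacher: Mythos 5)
Your proposal is correct and follows exactly the paper's own route: the paper proves this theorem by applying the transformation $f_{d}(d)=\frac{d}{\sqrt{d^{2}-H^{2}}}f_{r}(\sqrt{d^{2}-H^{2}})$ from \eqref{varchd} to the radial PDF \eqref{pdfr2}, which is precisely your change of variables. Your branch-by-branch substitution, endpoint bookkeeping, and rewriting of the $\cos^{-1}$ argument using $d_{(1),\rm{v}}^{2}=b^{2}+H^{2}$ just spell out the algebra the paper leaves implicit.
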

\begin{proof}
Eq. \eqref{pdfd2} yields after applying \eqref{varchd}.
\end{proof}

Figure \ref{Figure7} depicts the PDF of $\mathbf{d}$, assuming $a=180$m, $H=300$m, and two values of the ratio $a/b$. Figure \ref{Figure7} (a) considers $a/b=2$, while Fig. \ref{Figure7} (b) refers to the scenario where the ellipse degenerates into a circle as a limiting case. The validity is also confirmed through simulation.

\begin{corollary}
The CDF of $\mathbf{d}$ is given by 
\begin{equation}
{\small F_{\mathbf{d}}(d)=\left\{ 
\begin{array}{l}
\dfrac{d^{2}-H^{2}}{ab},\text{ \ }d_{\mathrm{{min}}}\leq d\leq d_{(1),%
\mathrm{v}} \\ 
\\ 
\begin{array}{l}
\dfrac{2(d^{2}\hspace{-3pt}-\hspace{-3pt}H^{2})\sqrt{1-\mathcal{E}^{2}}\cos
^{-1}\hspace{-3pt}\left( \hspace{-3pt}\dfrac{\sqrt{d^{2}-d_{(1),\mathrm{v}%
}^{2}}}{\mathcal{E}\sqrt{d^{2}-H^{2}}}\hspace{-3pt}\right) }{\pi ab\sqrt{1-%
\mathcal{E}^{2}}} \\ 
+\dfrac{2b^{2}\sin ^{-1}\hspace{-3pt}\left( \hspace{-3pt}\dfrac{\sqrt{(1-%
\mathcal{E}^{2})(d^{2}-d_{(1),\mathrm{v}}^{2}})}{b\mathcal{E}}\hspace{-3pt}%
\right) }{\pi ab\sqrt{1-\mathcal{E}^{2}}},%
\end{array}
\\ 
d_{(1),\mathrm{v}}<d\leq d_{\mathrm{{max},v}}. \\ 
\end{array}%
\right. }  \label{cdfd2}
\end{equation}
\end{corollary}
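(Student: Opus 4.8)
The plan is to obtain the CDF by integrating the PDF from Theorem~5, namely $F_d(d)=\int_{d_{\min}}^{d} f_d(t)\,dt$, treating the two regions separately. On the first interval $d_{\min,v}\le d\le d_{(1),v}$ only the branch $f_d(t)=2t/(ab)$ is active, so $\int_{H}^{d} 2t/(ab)\,dt = (d^2-H^2)/(ab)$, which matches the first line of \eqref{cdfd2}; note that at $d=d_{(1),v}=\sqrt{b^2+H^2}$ this equals $b^2/(ab)=b/a$, a value we will need as the additive constant for the second piece.

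For the second interval I would write $F_d(d)=\dfrac{b}{a}+\displaystyle\int_{d_{(1),v}}^{d}\dfrac{4t\cos^{-1}\!\left(\sqrt{\dfrac{t^2-d_{(1),v}^2}{\mathcal{E}^2(t^2-H^2)}}\right)}{\pi ab}\,dt$ and evaluate the remaining integral by the substitution $u=t^2-H^2$ (so $u$ runs from $b^2$ upward and $du=2t\,dt$), which turns the integrand into $\dfrac{2}{\pi ab}\cos^{-1}\!\left(\sqrt{\dfrac{u-b^2}{\mathcal{E}^2 u}}\right)du$. This is a pure $\int \cos^{-1}(\sqrt{(u-b^2)/(\mathcal{E}^2 u)})\,du$ antiderivative problem: I would integrate by parts, differentiating the arccosine and keeping $du$, which produces a rational-times-$1/\sqrt{\cdot}$ term whose integral is itself an inverse-trig function — this is where the $\sin^{-1}$ term in \eqref{cdfd2} originates, while the $\cos^{-1}$ term is the boundary term from the integration by parts. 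After simplifying with $d_{(1),v}^2=b^2+H^2$ and $u=d^2-H^2$, and absorbing $\mathcal{E}=\sqrt{1-b^2/a^2}$ so that $b^2=a^2(1-\mathcal{E}^2)$, the answer should collapse to the stated closed form over the common denominator $\pi ab\sqrt{1-\mathcal{E}^2}$.

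I would then verify the two consistency checks that pin down the constant of integration: continuity at $d=d_{(1),v}$, where the $\cos^{-1}$ argument becomes $\sqrt{(d_{(1),v}^2-d_{(1),v}^2)/\cdots}=0$ so $\cos^{-1}(0)=\pi/2$ and the $\sin^{-1}$ argument is likewise $0$, giving $\dfrac{2(d_{(1),v}^2-H^2)\sqrt{1-\mathcal{E}^2}\cdot(\pi/2)}{\pi ab\sqrt{1-\mathcal{E}^2}}=\dfrac{b^2}{ab}=\dfrac{b}{a}$, matching the first branch; and normalization at $d=d_{\max,v}=\sqrt{a^2+H^2}$, where $d^2-H^2=a^2$, the first $\cos^{-1}$ argument is $\sqrt{(a^2-b^2)/(\mathcal{E}^2 a^2)}=1$ so that term vanishes, and the $\sin^{-1}$ argument is $\sqrt{(1-\mathcal{E}^2)(a^2-b^2)}/(b\mathcal{E}) = \sqrt{(1-\mathcal{E}^2)\mathcal{E}^2 a^2}/(b\mathcal{E}) = \sqrt{1-\mathcal{E}^2}\,a/b = 1$, so $\sin^{-1}(1)=\pi/2$ and $F_d=\dfrac{2b^2(\pi/2)}{\pi ab\sqrt{1-\mathcal{E}^2}}=\dfrac{b}{a\sqrt{1-\mathcal{E}^2}}=\dfrac{b}{a\cdot b/a}=1$, as required.

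The main obstacle is the antiderivative of $\cos^{-1}\!\big(\sqrt{(u-b^2)/(\mathcal{E}^2 u)}\big)$: after integration by parts the leftover integral involves $\dfrac{u}{\sqrt{\mathcal{E}^2 u-(u-b^2)}}\cdot\dfrac{1}{u}$-type factors, and one must correctly complete the algebra under the radical (the quantity $\mathcal{E}^2 u-u+b^2 = b^2-(1-\mathcal{E}^2)u$) and recognize the resulting form as a scaled $\sin^{-1}$ rather than, say, a logarithm; keeping track of the sign of $1-\mathcal{E}^2$ and the precise constants so that everything lands over the single denominator $\pi ab\sqrt{1-\mathcal{E}^2}$ is the delicate bookkeeping step. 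Everything else is routine substitution and the two limit checks above, which also serve as a safeguard against sign errors in the integration by parts.
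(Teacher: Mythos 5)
Your proposal is correct and takes essentially the same approach as the paper, whose proof is the single sentence that the CDF follows by integrating the PDF (the paper even cites \eqref{cdfd2} instead of \eqref{pdfd2} by typo); your substitution $u=t^{2}-H^{2}$, the integration by parts producing the $\sin^{-1}$ term, and the continuity/normalization checks at $d_{(1),\rm{v}}$ and $d_{\rm{max},v}$ all verify — indeed the derivative of $u\cos^{-1}\bigl(\sqrt{(u-b^{2})/(\mathcal{E}^{2}u)}\bigr)+\tfrac{b^{2}}{\sqrt{1-\mathcal{E}^{2}}}\sin^{-1}\bigl(\sqrt{(1-\mathcal{E}^{2})(u-b^{2})}/(b\mathcal{E})\bigr)$ collapses exactly to $\cos^{-1}\bigl(\sqrt{(u-b^{2})/(\mathcal{E}^{2}u)}\bigr)$, confirming the stated closed form.
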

\begin{proof}
The CDF arises by integrating \eqref{cdfd2} in terms of $d$.
\end{proof}
\begin{figure}[h]
\centering
\subfigure[\scriptsize $a$=180m for $a/b$=2, and $H$=300m
]{\includegraphics[width=0.4\textwidth]{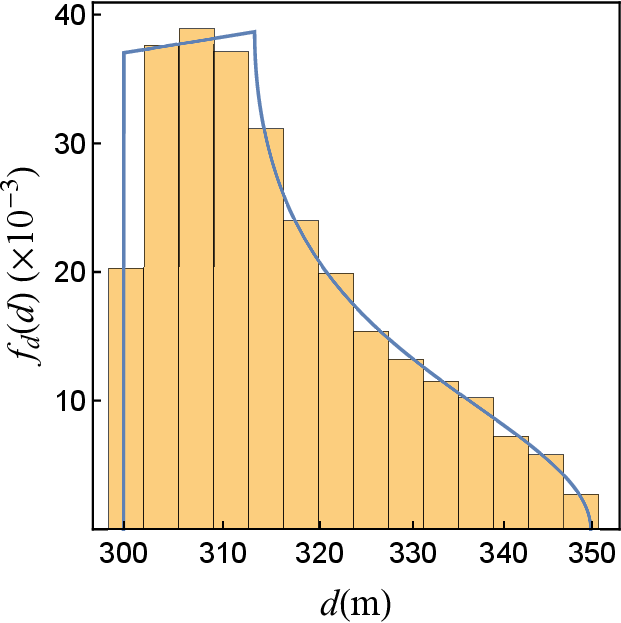}} 
\hspace{0in} 
\subfigure[\scriptsize $a$=180m, $a/b$=1, and $H$=300m
]{\includegraphics[width=0.4\textwidth]{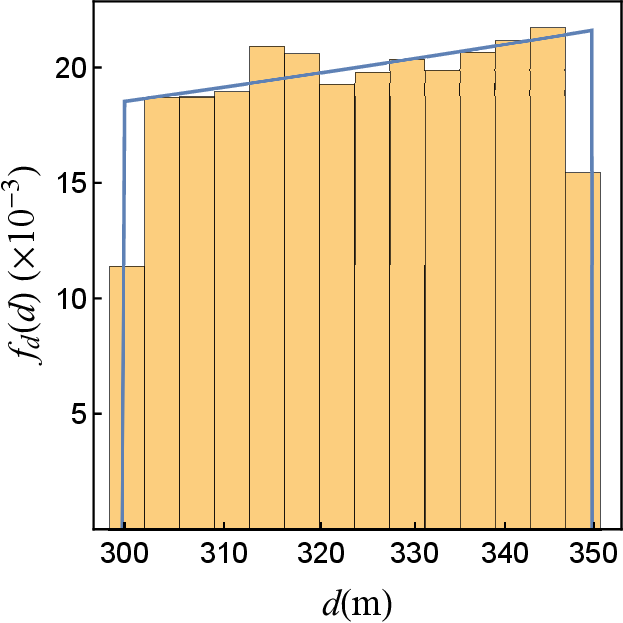}} 
\caption{PDF of $\mathbf{d}$. Analytical solutions (solid line) vs. Monte-Carlo simulation (histogram)}
\label{Figure7}
\end{figure}
\section{Performance Analysis}
  In this section, we evaluate the impact of antenna configuration, specifically the tilt angle $\psi$ and the semi-apex angle $\theta$, on system performance through the resulting elliptical footprint. Primary evaluation criteria include the statistical characterization of the SNR, the probability of outage (i.e., the probability that the received SNR falls below a predefined threshold) and the MU throughput. The analysis considers both single-user and MU cases under Nakagami-$m$ fading and accounts for how geometric parameters influence the distribution of users within the coverage area and their link quality to the UAV Tx. 

The circular models in \cite{J:Azari, J:Lai20} simplify the analysis, but overlook the antenna tilt asymmetry or the imbalance of beamwidth. The hexagonal and strip-shaped footprints in \cite{J:Lin} suit fixed layouts but lack flexibility for UAV mobility. Our model captures realistic elliptical coverage under directional antennas and Nakagami-$m$ fading. The results show improved outage and SNR performance, offering a more accurate and adaptable framework for UAV-based systems.

\subsection{SNR Statistics}
A thorough analysis of the SNR distribution is carried out in the context of a fading channel scenario, where the received signal includes both diffuse and specular components. In this scenario, the signal power, $\mathbf{p}$, adheres to a Nakagami-$m$ distribution, as described in \cite[eq. (3.39)]{B:Goldsmith}. The Nakagami-$m$ fading model, extensively employed in wireless communication systems, accurately characterizes the small-scale fading of the air-ground channel based on findings in \cite{C:Yanmaz} and \cite{C:Khawaja2016}. Unlike the Rayleigh fading model, the Nakagami-$m$ model introduces the parameter "$m$" to indicate the severity of the fading, thus accommodating a broader range of fading scenarios, including severe and mild conditions \cite{B:Singhal}.   Specifically, when $m = 1$, the Nakagami-$m$ distribution is exactly reduced to the Rayleigh distribution, which models environments without a line of sight (LOS) component~\cite{B:Simon}. For $m > 1$, it closely approximates the Rician distribution - commonly used for channels with a dominant LOS path and scattered multipath components - through a one-to-one correspondence between the Nakagami-$m$ parameter and the Rician factor $K$ \cite[eq. (2.26)]{B:Simon}. In the limiting case, as $m \to \infty$, the channel converges to an additive white gaussian noise (AWGN) model, representing a non-fading scenario.   This flexibility enables a more precise representation of the variability recognized in real conditions. That is, 

\begin{equation}
f_{\mathbf{p}}(p)=\left( \dfrac{m}{\overline{p}}\right) ^{m}\dfrac{p^{m-1}}{\Gamma (m)}
\mathrm{e}^{-\frac{mp}{\overline{p}}}, \label{Nakagami}
\end{equation}
with $\overline{p}\triangleq P_{t}/d^{\nu }$ being the mean signal power, $P_{t}$ the
transmitted power, $m\geq 1/2$ the Nakagami-$m$ fading parameter, $\nu$
the path loss exponent, and $\Gamma(\cdot)$ the Gamma function\cite[eq. (06.05.02.0001.01)]{Wolfram}. 
The conditional PDF of the received signal power with
respect to distance $\mathbf{d}$ is determined and averaged over $\mathbf{d}$ as
\begin{eqnarray}
f_{\mathbf{p|d}}(p|d)&=&\left(\frac{md^{\nu }}{P_{t}}\right) ^{m}\dfrac{ p^{m-1}}{\Gamma (m)}\mathrm{e}^{-\frac{mpd^{\nu }}{P_{t}}}, \label{CondPower}
\end{eqnarray}
Then, the unconditional power distribution can be deduced after integration over $\mathbf{d}$ in each case.
\begin{equation}
f_{\mathbf{p}}(p)=\int f_{\mathbf{p|d}}(p|d)f_{\mathbf{d}}(d)d\rm{d}, \label{UnPower}
\end{equation}
The integral in \eqref{UnPower} is finite, with limits determined by \eqref{pdfd1a} and \eqref{pdfd1b}, or by \eqref{pdfd2}, respectively.

\subsubsection{First scenario}
The PDF of the instantaneous SNR $\textbf{\textgamma}\triangleq \frac{\mathbf{p}}{P_n}$, where $P_{n}$ is the noise power, occurs according to the following RV transformation \cite{J:Vaiopoulos}
\begin{equation} \label{pdfSNR}
f_{\textbf{\textgamma}}(\gamma)=P_{n}f_{\mathbf{p}}(\gamma P_n).
\end{equation}
 where $f_{\mathbf{p}}(p)$ is the unconditional power distribution according to \eqref{UnPower}.
\begin{itemize}
\item[$\bullet$] $0 < x_{0} \leq a$
\begin{eqnarray}
\hspace{-15pt}f_{\textbf{\textgamma}}(\gamma ) \hspace{-3pt}&=&\hspace{-3pt}\frac{2\overline{\gamma }^{\frac{2}{\nu }}\gamma ^{-%
\frac{\nu +2}{\nu }}}{ab\nu \Gamma (m)m^{\frac{2}{\nu }}}\left( \Gamma
\left( m+\frac{2}{\nu },\frac{\gamma md_{\mathrm{min}}^{\nu }}{\overline{%
\gamma }}\right) \right.   \notag \\
&&-\left. \Gamma \left( m+\frac{2}{\nu },\frac{\gamma md_{(1),t}^{\nu }}{%
\overline{\gamma }}\right) \right) +\mathcal{I}_{t}(\gamma ),  \label{pdfg1a}
\end{eqnarray}
\begin{eqnarray}
\hspace{-6pt}\mathcal{I}_{t}(\gamma ) \hspace{-6pt}&\triangleq &\hspace{-6pt}\frac{2\overline{\gamma }^{\frac{2}{%
\nu }}\gamma ^{m-1}}{\pi ab\nu \Gamma (m)m^{\frac{2}{\nu }}}\bigintss_{\frac{
md_{(1),t}^{\nu /2}}{\overline{\gamma }}}^{\frac{md_{\mathrm{{max},t}}^{\nu
/2}}{\overline{\gamma }}}\dfrac{x^{m+\frac{2}{\nu }-1}}{\mathrm{e}^{\gamma x}
}\times \\ 
\hspace{-6pt}&&\cos ^{-1}\left( \dfrac{k_{1}+\Lambda _{1}\left( a,b,\sqrt{\left( \frac{x%
\overline{\gamma }}{m}\right) ^{\frac{2}{\nu }}-H^{2}}\right) }{\mathcal{E}%
^{2}\left( \left( \frac{x\overline{\gamma }}{m}\right) ^{\frac{2}{\nu }%
}-H^{2}\right) }\right) \mathrm{d}x, \notag  \label{Int1a}
\end{eqnarray}

\item[$\bullet$] $x_{0} > a$
\begin{eqnarray}
\hspace{-6pt}f_{\textbf{\textgamma}}(\gamma ) \hspace{-6pt}&=&\hspace{-6pt}\frac{2\overline{\gamma }^{\frac{2}{\nu }}\gamma
^{m-1}}{\pi ab\nu \Gamma (m)m^{\frac{2}{\nu }}}\bigintss_{\frac{%
md_{(2),t}^{\nu /2}}{\overline{\gamma }}}^{\frac{md_{\mathrm{{max},t}}^{\nu
/2}}{\overline{\gamma }}}\dfrac{x^{m+\frac{2}{\nu }-1}}{\mathrm{e}^{\gamma x}%
}\times  \label{pdfg1b} \\
&&\cos ^{-1}\left( \dfrac{k_{1}+\Lambda _{1}\left( a,b,\sqrt{\left( \frac{x%
\overline{\gamma }}{m}\right) ^{\frac{2}{\nu }}-H^{2}}\right) }{\mathcal{E}%
^{2}\left( \left( \frac{x\overline{\gamma }}{m}\right) ^{\frac{2}{\nu }%
}-H^{2}\right) }\right) \mathrm{d}x.  \notag
\end{eqnarray}
\end{itemize}
where $\overline{\gamma }\triangleq\frac{P_{t}}{P_{n}}$ is the SNR at the Tx and $
\Gamma (\cdot ,\cdot)$ is the incomplete Gamma function,
as defined in \cite[eq. (06.06.02.0001.01)]{Wolfram}. It is important to note that the integral $\mathcal{I}_t(\gamma)$ in \eqref{pdfg1a} defined in (26) does not have a known analytical solution and must be calculated numerically.

Then, the CDF of the SNR, $\textbf{\textgamma}$, is found after integration as
\begin{itemize}
\item[$\bullet$] $0 < x_{0} \leq a$
\begin{eqnarray}
\hspace{-10pt}F_{\textbf{\textgamma}}(\gamma ) \hspace{-10pt}&=&\hspace{-10pt}\int_{0}^{\gamma }\mathcal{I}_{t}(x)\mathrm{d}x+%
\frac{(a-x_{0})^{2}}{ab}  \label{cdfg1a} \\
&&-\frac{\left( d_{\mathrm{min}}^{2}\mathcal{M}\left( d_{\mathrm{min}%
},\gamma \right) \right) -\left( d_{(1),t}^{2}\mathcal{M}\left(
d_{(1),t},\gamma \right) \right) }{ab\Gamma (m)}.  \notag
\end{eqnarray}

\item[$\bullet$] $x_{0} > a$
\begin{eqnarray}
\hspace{-20pt}F_{\textbf{\textgamma}}(\gamma ) \hspace{-10pt}&=&\hspace{-10pt}\bigintss_{0}^{\gamma }\frac{2\overline{\gamma }^{%
\frac{2}{\nu }}y^{m-1}}{\pi ab\nu \Gamma (m)m^{\frac{2}{\nu }}}\bigintss_{%
\frac{md_{(2),t}^{\nu /2}}{\overline{\gamma }}}^{\frac{md_{\mathrm{{max},t}%
}^{\nu /2}}{\overline{\gamma }}}\dfrac{x^{m+\frac{2}{\nu }-1}}{\mathrm{e}%
^{\gamma y}}\times  \label{cdfg1b} \\
&&\hspace{-20pt}\cos ^{-1}\left( \dfrac{k_{1}+\Lambda _{1}\left( a,b,\sqrt{\left( \frac{x%
\overline{\gamma }}{m}\right) ^{\frac{2}{\nu }}-H^{2}}\right) }{\mathcal{E}%
^{2}\left( \left( \frac{x\overline{\gamma }}{m}\right) ^{\frac{2}{\nu }%
}-H^{2}\right) }\right) \mathrm{d}x\mathrm{d}y.  \notag
\end{eqnarray}
\end{itemize}

\hspace{-7pt} where
\begin{equation}
\hspace{-5pt}\mathcal{M} (x ,\gamma )\hspace{-3pt}\triangleq\hspace{-3pt}\left( \hspace{
-2pt}\dfrac{\gamma m x^{\nu }}{\overline{\gamma }}\hspace{-4pt}\right)
^{-\frac{2}{\nu}}\hspace{-6pt}\Gamma \left( \hspace{-2pt}m\hspace{-3pt}+
\hspace{-3pt}\dfrac{2}{\nu },\dfrac{\gamma m x^{\nu }}{\overline{
\gamma }}\hspace{-2pt}\right) -\Gamma \left( \hspace{-4pt}m,\dfrac{\gamma
m x^{\nu }}{\overline{\gamma }}\hspace{-2pt}\right),  \label{Lambda}
\end{equation}
is a user-defined function, whereas the integrals in the above equations are estimated numerically.

\subsubsection{Second scenario}
Following the same analysis as previously, the PDF of $\textbf{\textgamma}$ is expressed after a bit of algebra in \eqref{pdfg1} where the integral $\mathcal{I}_{\rm{v}}(\gamma)$ defined in \eqref{Int1} is calculated numerically.

\begin{eqnarray}
f_{\textbf{\textgamma}}(\gamma ) &=&\frac{2\overline{\gamma }^{\frac{2}{\nu }}\gamma ^{-%
\frac{\nu +2}{\nu }}}{ab\nu \Gamma (m)m^{\frac{2}{\nu }}}\left( \Gamma
\left( m+\frac{2}{\nu },\frac{\gamma md_{\mathrm{min}}^{\nu }}{\overline{%
\gamma }}\right) \right.   \notag \\
&&-\left. \Gamma \left( m+\frac{2}{\nu },\frac{\gamma md_{(1),\mathrm{v}%
}^{\nu }}{\overline{\gamma }}\right) \right) +\mathcal{I}_{\mathrm{v}%
}(\gamma)  \label{pdfg1}
\end{eqnarray}

\begin{eqnarray}
\mathcal{I}_{\mathrm{v}}(\gamma ) &\triangleq &\frac{4\overline{\gamma }^{%
\frac{2}{\nu }}\gamma ^{m-1}}{\pi ab\nu \Gamma (m)m^{\frac{2}{\nu }}}%
\bigintss_{\frac{md_{(1),\mathrm{v}}^{\nu /2}}{\overline{\gamma }}}^{\frac{%
md_{\mathrm{{max},{v}}}^{\nu /2}}{\overline{\gamma }}}\frac{x^{m+\frac{2}{\nu }-1}}{\mathrm{e}^{\gamma x}}
\times  \notag \label{Int1} \\
&&\cos ^{-1}\left( \sqrt{\frac{(\frac{x\overline{\gamma }}{m})^{\frac{2}{\nu 
}}-d_{(1),\mathrm{v}}^{2}}{\mathcal{E}^{2}\left( (\frac{x\overline{\gamma }}{%
m})^{\frac{2}{\nu }}-H^{2}\right) }}\right) \mathrm{d}x,  
\end{eqnarray}

By integrating \eqref{pdfg1} and after some algebraic manipulations, the CDF of $\textbf{\textgamma}$ is deduced as

\begin{eqnarray}
F_{\textbf{\textgamma}}(\gamma ) &=&\int_{0}^{\gamma }\mathcal{I}_{\mathrm{v}}(x)\mathrm{%
d}x+\frac{b}{a}  \label{cdfg1} \\
&&-\frac{\left( d_{\mathrm{min}}^{2}\mathcal{M}\left( d_{\mathrm{min}%
},\gamma \right) \right) -\left( d_{(1),\mathrm{v}}^{2}\mathcal{M}\left(
d_{(1),\mathrm{v}},\gamma \right) \right) }{ab\Gamma (m)}.  \notag
\end{eqnarray}

\subsection{Outage Performance}
The following subsection provides a comprehensive evaluation of the probability of outage, defined as $P_{\rm{out}}=F_{\textbf{\textgamma}}(\gamma_{\rm{th}})$, for a specific SNR threshold. The default parameters used in the calculations are $\nu=2.5$, $H=300$m and $\overline{\gamma}=95$dB \cite{J:Azari}, \cite{J:Khuwaja}, unless otherwise stated.

\subsubsection{First scenario}

\begin{figure}[h]
\centering
\subfigure[\scriptsize $\gamma_{\rm{th}}=10\rm{dB}$
]{\includegraphics[width=0.4\textwidth]{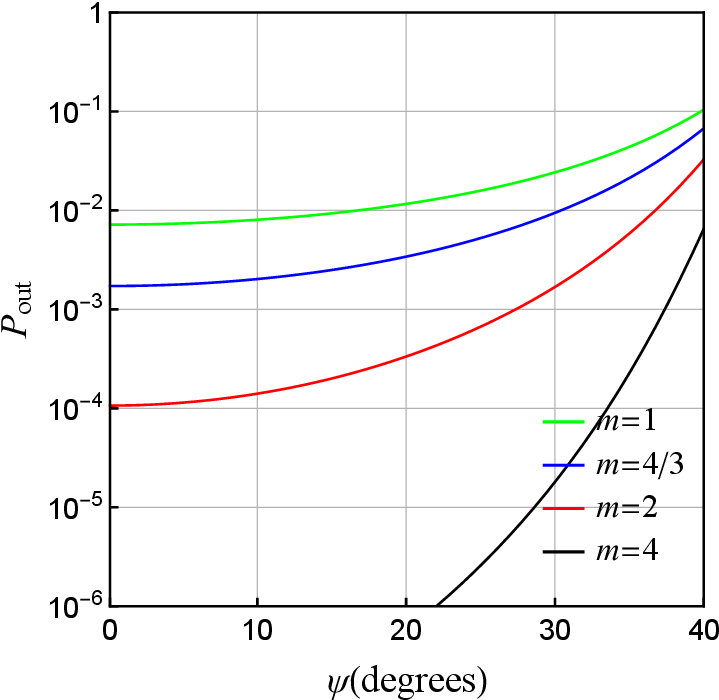}} 
\hspace{0.1in} 
\subfigure[\scriptsize $\gamma_{\rm{th}}=15\rm{dB}$
]{\includegraphics[width=0.4\textwidth]{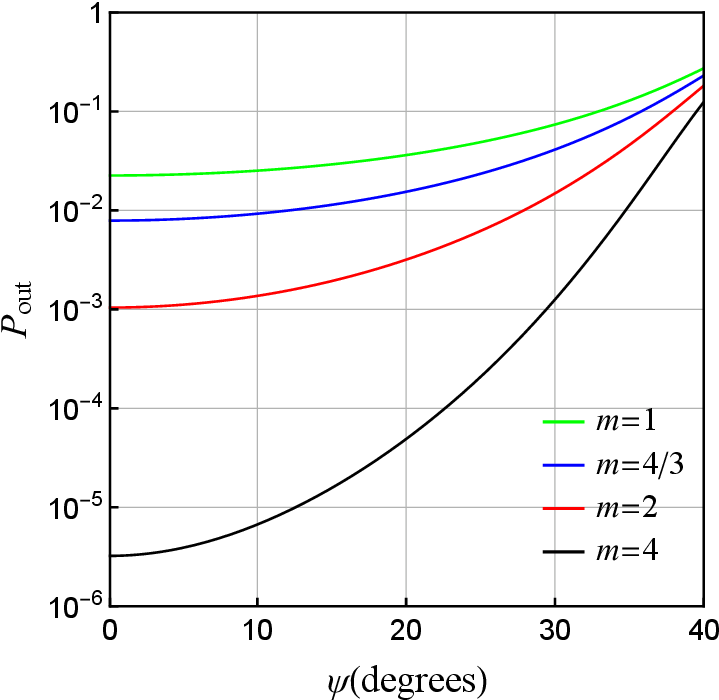}} 
\hspace{0.1in} 
\subfigure[\scriptsize $\gamma_{\rm{th}}=10\rm{dB}$
]{\includegraphics[width=0.4\textwidth]{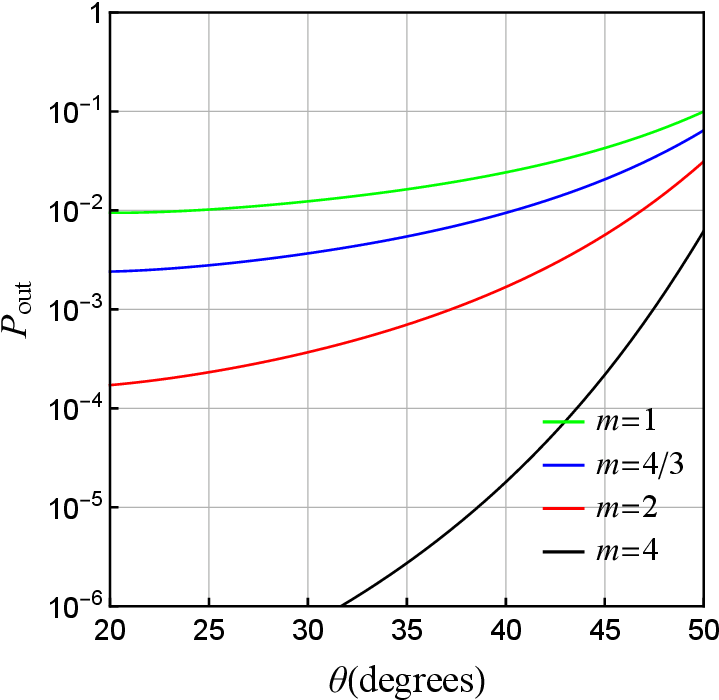}} 
\hspace{0.1in} 
\subfigure[\scriptsize $\gamma_{\rm{th}}=15\rm{dB}$
]{\includegraphics[width=0.4\textwidth]{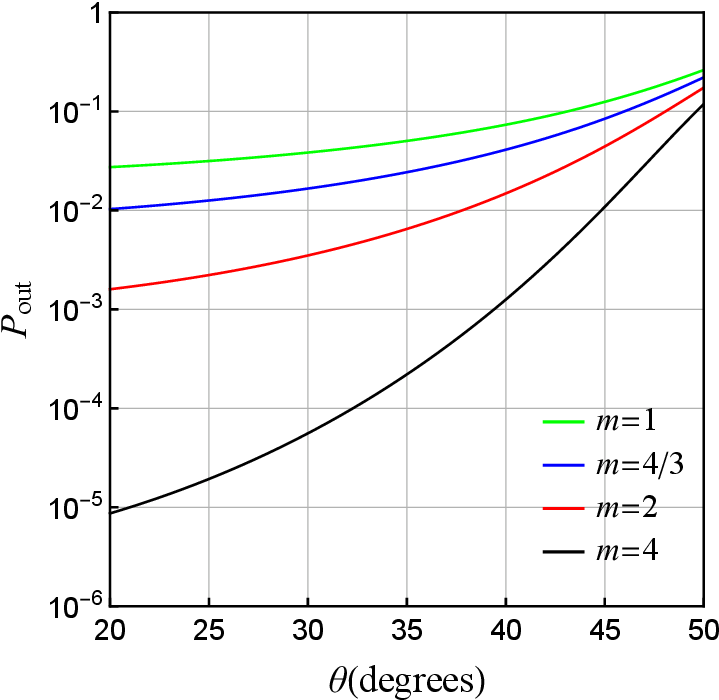}} 
\caption{Outage probability vs. $\psi$ and $\theta$ for two $\gamma_{\rm{th}}$ values ((a), (b) $\theta=40^{\circ}$. (c), (d) $\psi=30^{\circ}$).}
\label{Figure8}
\end{figure}

 Figures \ref{Figure8}(a) through (d) illustrate the impact of the tilt and semi-apex angles, as well as the fading parameter, on the probability of outage. Figures \ref{Figure8}(a) and (b) plot the outage probability against the tilt angle for typical values of $m$, ranging from strict non-line-of-sight (NLOS) conditions ($m=1$) to conditions where the line-of-sight (LOS) component is dominant ($m=4$), assuming $\gamma_{\rm{th}}$ of 10dB and 15dB, respectively, and $\theta=40^{\circ}$. The differences become less significant as the tilt angle increases beyond $40^{\circ}$. However, for lower tilt angles, LOS conditions can significantly improve the coverage area (as shown in Fig. \ref{Figure2}(b)) while also maintaining a low outage probability for both threshold values. This is of considerable importance since there is frequently a requirement to expand the coverage area in response to particular events (such as concerts, sporting events, areas affected by natural disasters, etc.) while upholding a high level of link reliability. Adjusting the tilt of the UAV antenna offers network designers a valuable alternative in pursuing this objective. Similar findings can be observed in Figs. \ref{Figure8}(c) and (d), where the tilt angle is fixed at $30^{\circ}$, and $\theta$ varies for different values of $m$. Again, the strong LOS component allows low levels of outage probability at higher $\theta$ values. For example, assuming an acceptable $P_{\rm{out}}$ of $10^{-4}$, the angle $\theta$ can be increased to $44^{\circ}$ for a threshold of 10dB or $39^{\circ}$ for a threshold of 15dB. As a result, the semi-apex angle can provide network designers with additional freedom to improve coverage in tilted transmissions while maintaining link quality.

\begin{figure}[h]
\centering
\subfigure[\scriptsize $\gamma_{\rm{th}}=10\rm{dB}$
]{\includegraphics[width=0.4\textwidth]{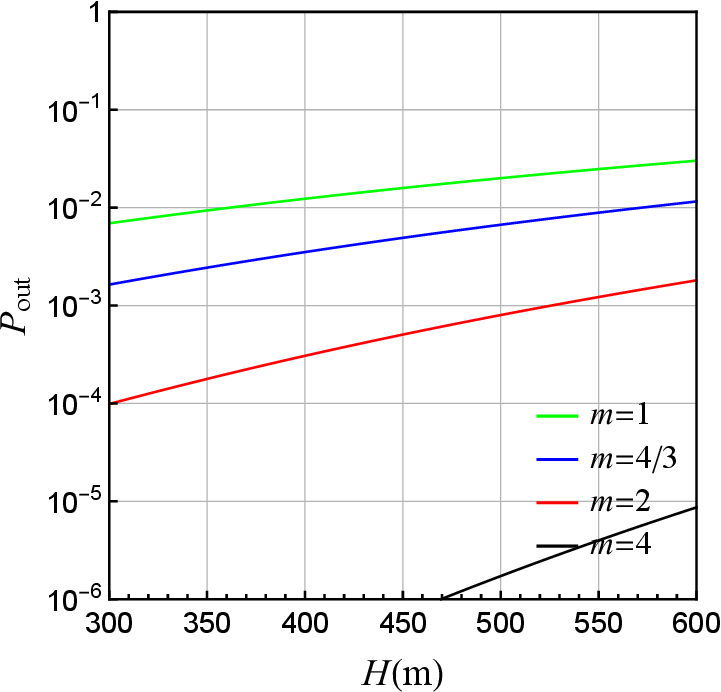}} 
\hspace{0.1in} 
\subfigure[\scriptsize $\gamma_{\rm{th}}=15\rm{dB}$
]{\includegraphics[width=0.4\textwidth]{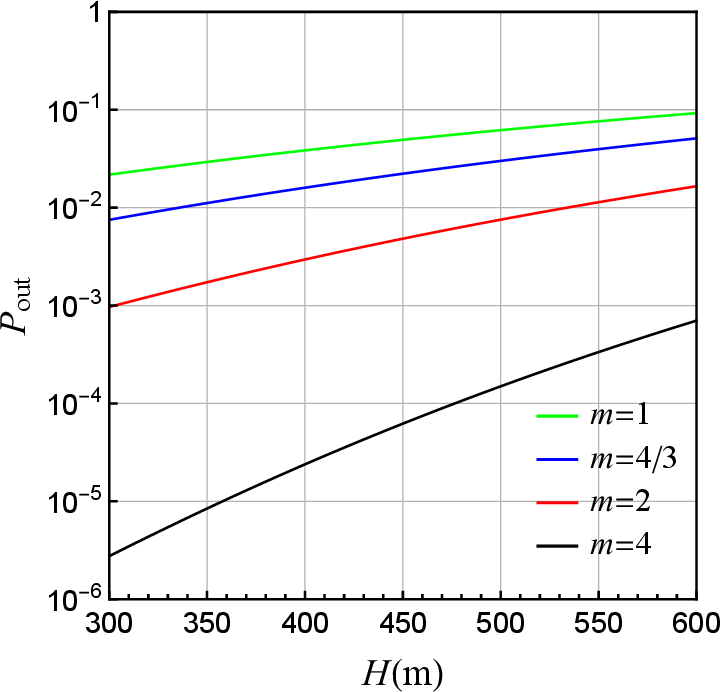}}
\hspace{0.1in} 
\subfigure[\scriptsize $\gamma_{\rm{th}}=10\rm{dB}$, $H=450m$
]{\includegraphics[width=0.4\textwidth]{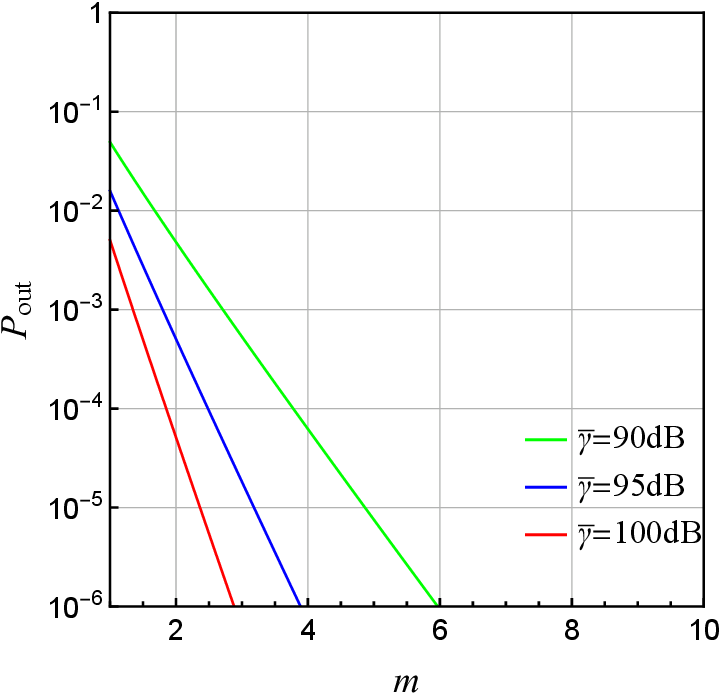}}
\hspace{0.1in} 
\subfigure[\scriptsize $\gamma_{\rm{th}}=15\rm{dB}$, $H=450m$
]{\includegraphics[width=0.4\textwidth]{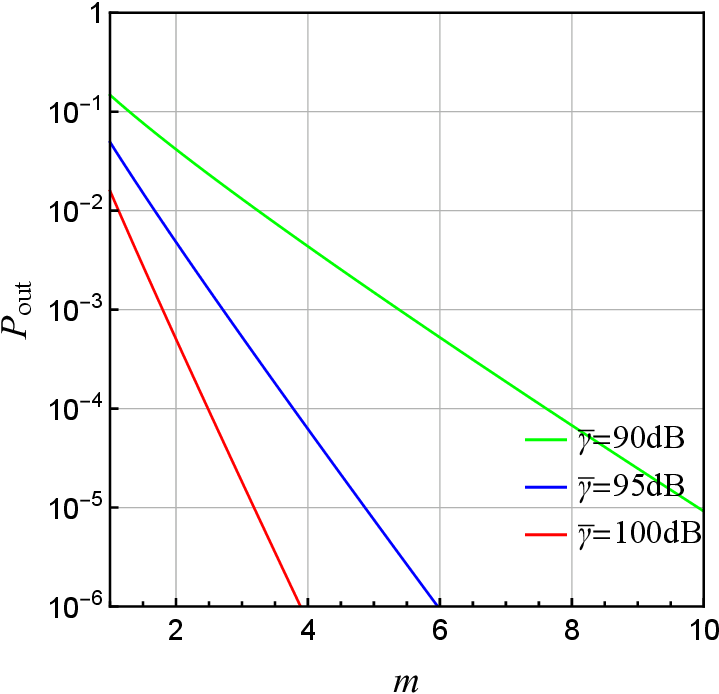}}
\caption{Outage probability vs (a), (b) $H$ and (c), (d) $m$ for two $\gamma_{\rm{th}}$ values.}
\label{Figure9}
\end{figure}

Table \ref{Table2} summarizes the key performance trends observed in various combinations of UAV altitude and antenna tilt angle for the first scenario, assuming a threshold SNR of $\gamma_{\rm{th}} = 10$dB. Each column in the table corresponds to a specific case, characterized by a distinct set of parameters, such as the Nakagami‑$m$ fading parameter, target outage probability, and antenna semi-apex angle.
\begin{itemize}
    \item Case 1: $m=1$, $P_{\rm{out}}=10^{-2}, \theta=30^{\circ}$
    \item Case 2: $m=1$, $P_{\rm{out}}=10^{-2}, \theta=40^{\circ}$
    \item Case 3: $m=2$, $P_{\rm{out}}=10^{-4}, \theta=30^{\circ}$
    \item Case 4: $m=2$, $P_{\rm{out}}=10^{-4}, \theta=40^{\circ}$
    \item Case 5: $m=4$, $P_{\rm{out}}=10^{-7}, \theta=30^{\circ}$
    \item Case 6: $m=4$, $P_{\rm{out}}=10^{-7}, \theta=40^{\circ}$
\end{itemize}

\begin{table}[ht]
 
\centering
\caption{First scenario: Tilt angle, $\psi$, vs. altitude, $H$.}
\label{Table2}
\begin{tabular}{c||cccccc}
\hline \hline
\multirow{2}{*}{$\psi(^o)$} & \multicolumn{6}{c}{$H$(m)} \\
\cline{2-7}
 & Case 1 & Case 2 & Case 3 & Case 4 & Case 5 & Case 6 \\
\hline
10 & 358.7 & 327.3 & 310.3 & 280.1 & 352.6 & 309.3\\
15 & 345.7 & 308.4 & 297.4 & 260.9 & 332.3 & 297.2\\
20 & 327.4 & 282.3 & 279.7 & 235.3 & 309.5 & 250.6\\
25 & 304.0 & 249.2 & 257.3 &204.3 & 285.0 & 215.6\\
\hline \hline
\end{tabular}
\end{table}

\subsubsection{Second scenario}
Figures \ref{Figure9}(a) to (d) illustrate the impact of altitude and fading parameters on the probability of outage with $a=259.8$m and $b=212.2$m. It is evident from Figs. \ref{Figure9}(a) and (b) that achieving a target outage probability of $10^{-4}$ is feasible at higher altitudes ($H=480$m for $m=4$ and $\gamma_{\rm{th}}=15$dB) under LOS conditions. In contrast, Figs. \ref{Figure9}(c) and (d) suggest that the influence of the LOS component diminishes with increasing $\overline{\gamma}$, as the target value is reached at lower values of $m$. 

\begin{figure}[h]
\centering
\includegraphics[width=4in]{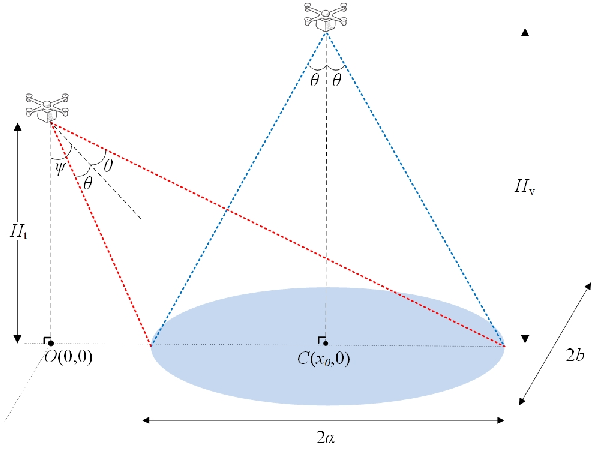}
\caption{Tilted vs. vertical transmission.}
\label{Figure10}
\end{figure}

\subsubsection{Comparison of two scenarios} To perform a valid comparison between the two scenarios, we utilize a UAV with a tilted directional antenna and a second UAV with a vertical directional antenna, both of which cover the same elliptical area on the ground, as illustrated in Fig. \ref{Figure10}. The first UAV hovers at $H_{t}=300$m and has a directional antenna tilted at $\psi$ degrees to cover the shaded area. Meanwhile, the second UAV covers the same area but hovers at altitude $H_{\rm{v}}$. In both cases, the semi-apex angle is $\theta=30^{\circ}$, while $m=4/3$. Furthermore, according to \eqref{EllDim}, the values of the axes $\{a,b\}$ of the ellipse for $\psi=\{20^{\circ},40^{\circ}\}$ are $\{205.2, 188.5\}$ and $\{385.6, 258.5\}$, respectively.

\begin{figure}[h]
\centering
\subfigure[\scriptsize ]{\includegraphics[width=0.4\textwidth]{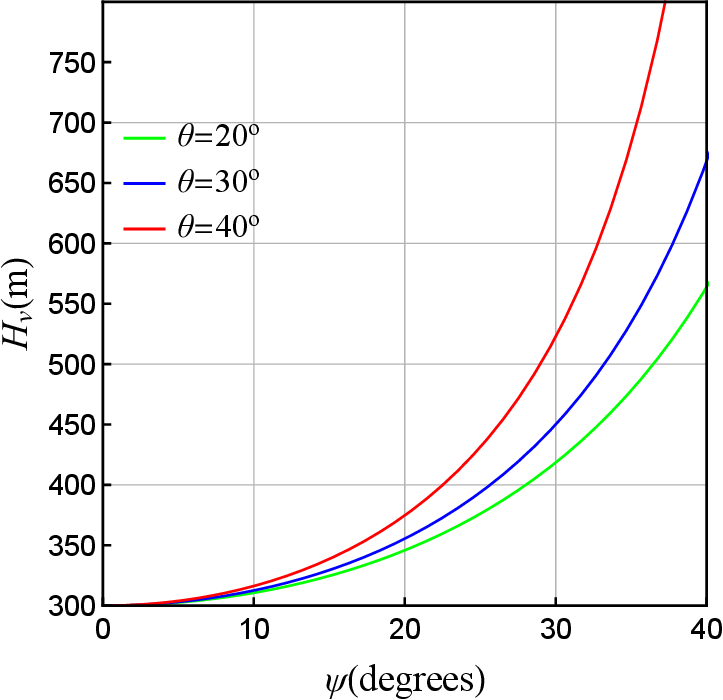}} 
\hspace{0.1in} 
\subfigure[\scriptsize $H_{\rm{v}}^{(1)}=355.4m, H_{\rm{v}}^{(2)}=668m$]{\includegraphics[width=0.4\textwidth]{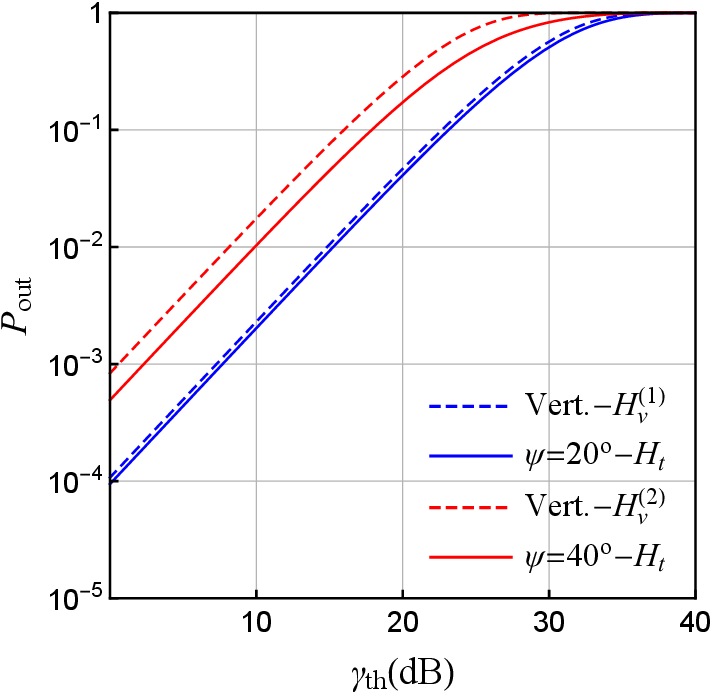}}
\caption{(a) Required altitude for vertical transmission vs. tilt angle, assuming $m=4/3$. (b) Outage probability vs. $\gamma_{\rm{th}}$ for both scenarios and two tilt angles.}
\label{Figure11}
\end{figure}

Figure 11 outlines the performance comparison between the two UAVs. Within the context of the second scenario, Fig. \ref{Figure11}(a) displays the association between $\psi$ and $H_{\rm{v}}$, required by the UAV for directional transmission. The represented tradeoff implies that for the same area coverage, an increase in tilt angle (as per the first scenario) leads to a higher requirement for $H_{\rm{v}}$. Furthermore, as $\theta$ increases, the required altitude increases even further.

The UAVs are compared within the same elliptical region for two separate $\psi$ values, namely $\psi=20^{\circ}$ (blue line) and $\psi=40^{\circ}$ (red line) for the first UAV hovering at $H_t=300$m, as shown in Fig. \ref{Figure11} (b). Furthermore, the corresponding curves for the second UAV with vertical directional transmission are displayed by dashed lines with the altitude values extracted from Fig. \ref{Figure11}(a). The analysis indicates that the use of antenna tilt results in a marginal decrease in the probability of outage at $\psi=20^{\circ}$ for a specific threshold of $\gamma_{th}$, which increases when $\psi$ increases to $40^{\circ}$. This disparity arises because the second scenario requires a greater altitude ($H_{\rm{v}}^{(1)}=355.4$m for $\psi=20^{\circ}$ and $H_{\rm{v}}^{(2)}=668$m for $\psi=40^{\circ}$) to cover the same area. Based on the above, it can be argued that implementing the first scenario can enhance performance and energy efficiency since UAVs with vertical directional antennas need to ascend to significantly higher altitudes to cover the same footprint areas.

\subsection{MU Scenario}
Subsequently, we explore an MU scenario in which $M$ Rx are uniformly distributed within an elliptical disc. To address this scenario, we employ a representative scheduling scheme called maxSNR. This scheme allocates additional resources to Rxs based on the optimal channel state observed during each transmission time. The CDF and PDF of $\textbf{\textgamma}$ are subsequently formulated, as outlined in \cite{J:Vaiopoulos2},

\begin{equation}
F_{\textbf{\textgamma},MU}(\gamma)=(F_{\textbf{\textgamma}}(\gamma))^{M}, \label{cdfSNRsch}
\end{equation}

\begin{equation}
f_{\textbf{\textgamma},MU}(\gamma)=M(F_{\textbf{\textgamma}}(\gamma))^{M-1}f_{\textbf{\textgamma}}(\gamma), \label{pdfSNRsch}
\end{equation}
whereas the throughput can be computed as

\begin{equation}
\overline{C}_{MU}=\int_{0}^{\infty}\log_{2}(1+\gamma)M(F_{\textbf{\textgamma}}(\gamma))^{M-1}f_{\textbf{\textgamma}}(\gamma)d\gamma. \label{ThrSNR}
\end{equation}
Substituting the expressions \eqref{pdfg1a} or \eqref{pdfg1b} and \eqref{cdfg1a} or \eqref{cdfg1b}, relevant to scenario 1, or utilizing \eqref{pdfg1} and \eqref{cdfg1} linked to scenario 2, in the framework of equations \eqref{cdfSNRsch} and \eqref{pdfSNRsch}, respectively, allows calculation of the numerical value for \eqref{ThrSNR}.

\begin{figure}[h]
\centering
\subfigure[\scriptsize $H=300m, \theta=40^{\circ}$]{\includegraphics[width=0.4\textwidth]{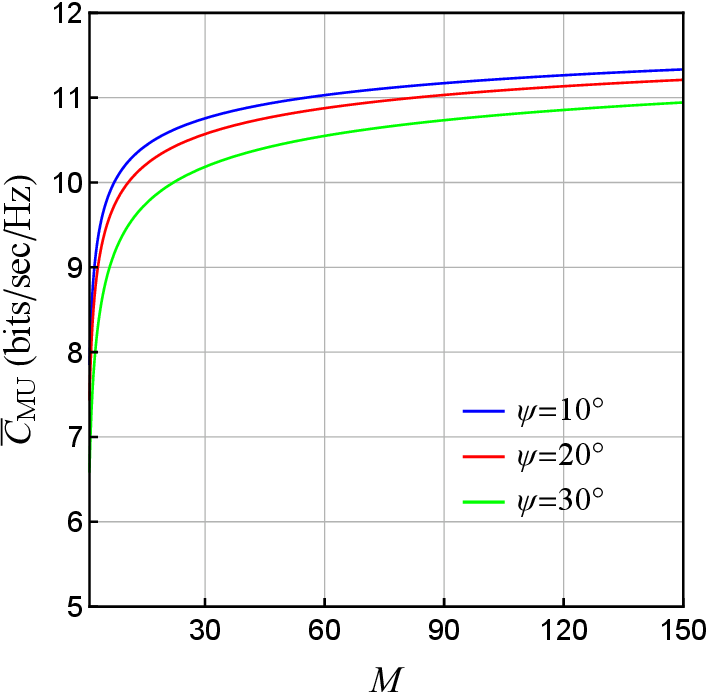}} 
\hspace{0.1in} 
\subfigure[\scriptsize $a=259.8m, b=212.1m$]{\includegraphics[width=0.4\textwidth]{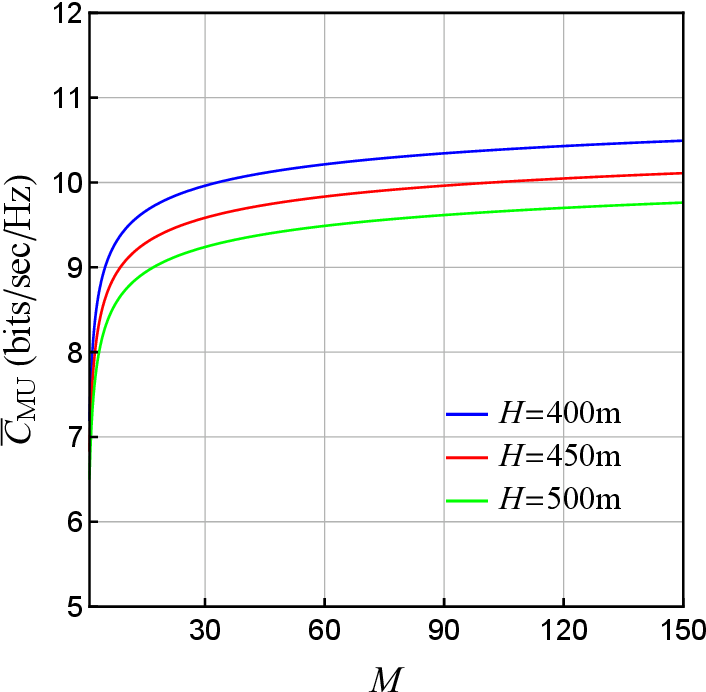}}
\caption{Throughput results for (a) scenario 1 and (b) scenario 2.}
\label{Figure12}
\end{figure}

Figure \ref{Figure12} visualizes the resulting throughput for both scenarios as a function of $M$, considering $m=1$, $\overline{\gamma}=90$dB and varying values of the tilt angle, $\psi$, for the first scenario and height, $H$, for the second scenario. The throughput experiences a notable increase with lower tilting angles as more Rxs are closer to the Tx, facilitating the reception of higher power. A similar trend is observed in the second scenario when lower altitude values are used.

\begin{figure}[h]

\centering
{\includegraphics[width=0.4\textwidth]{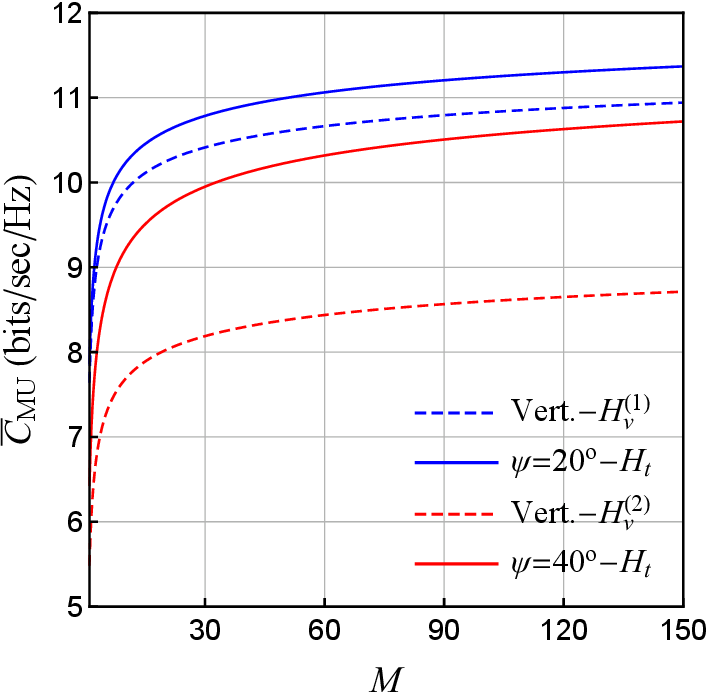}}
\caption{Throughput comparison between the two scenarios ($H_{\rm{v}}^{(1)}=355.4$m, $H_{\rm{v}}^{(2)}=668$m).}
\label{Figure13}
\end{figure}

Figure \ref{Figure13} presents a throughput comparison for both scenarios as a function of $M$, using the same elliptical footprints and configuration parameters as in Fig. \ref{Figure11}. The analysis indicates that antenna tilt results in a slight increase in throughput at $\psi=20^{\circ}$ for a specific $\gamma_{th}$, which becomes substantial when $\psi$ reaches $40^{\circ}$. These findings confirm the trends observed in Fig. \ref{Figure11}.

 The results also indicate that as the number of users increases, the overall performance gain tends to saturate, primarily because the available resources are preferentially allocated to users experiencing higher SNR. Consequently, the benefit of adding more users diminishes, reflecting inherent limitations in resource fairness and efficiency in dense user deployments. 
 
The number of users is a critical parameter of the system, and high-density MU scenarios introduce several key challenges. Although the derived analytical expressions remain applicable as the system scales, their numerical evaluation becomes increasingly computationally intensive with a growing user population. To address this, future research may focus on developing efficient approximation techniques that maintain analytical tractability while significantly reducing computational complexity.  

\subsection{Comparative Analysis of Footprint Models}
We consider three representative footprint geometries: (i) a circular disc of radius $R$, (ii) a hexagon inscribed within the disc, and (iii) an elliptical disc with major semi-axis $a=R$, assuming a UAV at altitude $H$ with vertical orientation. The circular and hexagonal cases employ identical HPBWs, whereas the elliptical disc follows Scenario 2. The circular footprint is obtained as a special case of the elliptical configuration. For hexagonal geometry, the CDF of $\mathbf{r}$ is given in (19) of \cite{J:Khalid13}, while $\mathbf{d}$ and $\boldsymbol{\gamma}$ can be expressed in closed form using (12) and (24), respectively. Figure \ref{Figure14} illustrates a comparison of these footprints for $R=350$m, $H=200$m, $\nu=2.5$, $m=4/3$, and $\overline{\gamma}=95$dB. The results demonstrate that the elliptical footprint achieves lower $P_{out}$ at low-to-moderate SNR (Fig. 14(a)), and superior throughput scalability in MU scenarios (Fig. 14(b)), thus providing a better trade-off between coverage and diversity.

\begin{figure}[h]
\centering
\subfigure[]{\includegraphics[width=0.4\textwidth]{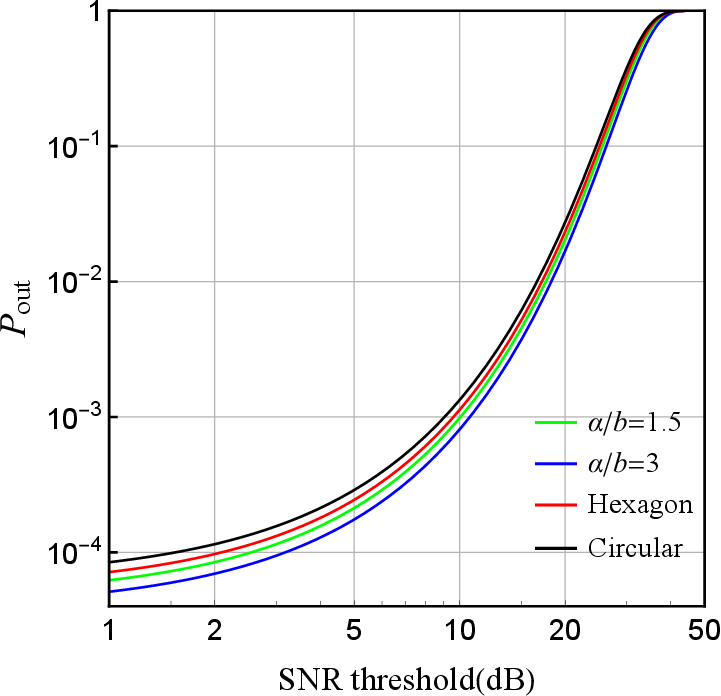}} 
\hspace{0in} 
\subfigure[]{\includegraphics[width=0.4\textwidth]{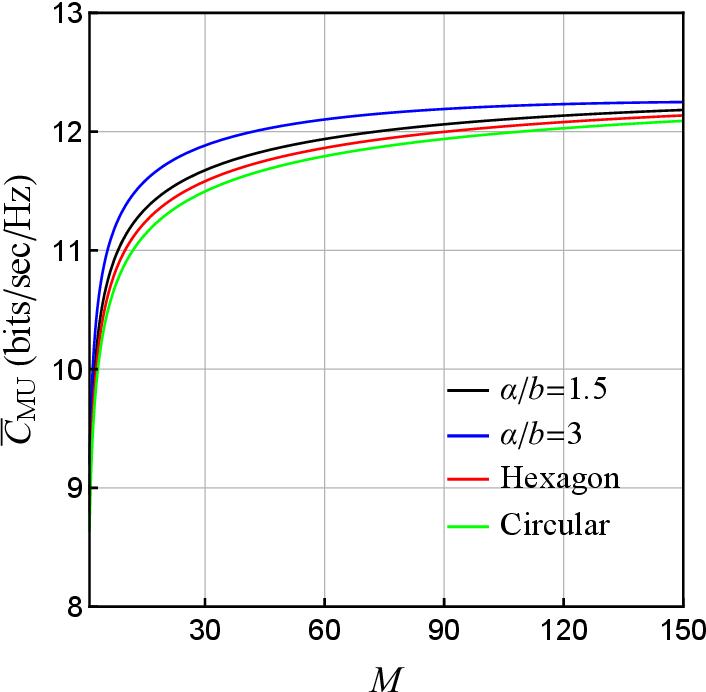}} 
\caption{Comparison of footprint geometries: (a) outage probability vs. SNR threshold, and (b) throughput vs. number of users $M$.}
\label{Figure14}
\end{figure}

\section{Discussion}

\subsection{Concluding Remarks}
In conclusion, current research provides valuable insight into the performance of wireless systems with random terminal positions within an elliptical area. The study focused on an air-to-ground outdoor link between a UAV and a randomly placed user and evaluated the outage performance, including Nakagami-$m$ fading. Two distinct scenarios were considered. The first assumed an airborne Tx with a tilted directional antenna, whereas the second had a vertical directional antenna. We illustrate the coverage of an area implemented employing both scenarios and conducted a performance comparison between them. Our investigation demonstrated that the use of tilt transmission leads to a reduced probability of outage compared to vertical transmission within the same coverage region. Furthermore, modifying the angle to achieve a horizontal displacement is a standard and more energy efficient method to extend the coverage area compared to the vertical elevation necessary in the second scenario. Ultimately, the scenario of MU was considered and throughput was evaluated. 

 The analytical results offer practical value for the design of UAV-based wireless systems. The derived distance and SNR distributions enable the evaluation of the coverage at the system level, helping to optimize tilt angles and beamwidths to reduce outage or boost throughput. The closed-form expressions also support adaptive UAV placement with lower altitudes and energy use. In MU settings, metrics guide scheduling strategies such as maxSNR-based allocation to improve spectral efficiency.  

The study can help designers determine the desired elliptical footprints based on the type of antenna available. The equations deduced for the radial and Euclidean distances between the UAV and the user play a vital role in comprehensively assessing the network performance. These equations provide a means to explore other network metrics, including, but not limited to, bit error rate, capacity, and energy efficiency. Moreover, they can assist in optimizing various network parameters, such as the Tx antenna pattern or the throughput in a MU scenario.

  In real-world applications such as emergency response, post-disaster connectivity restoration, and temporary event coverage, UAVs are frequently deployed in complex terrains and over non-uniform user distributions. The flexibility enabled by elliptical coverage, achieved through adjustment of the tilt of the antenna or the shaping of the beam, can play a crucial role in enhancing communication performance under such dynamic conditions. Future research may investigate the integration of these models into real-time UAV path planning and coverage optimization frameworks.

In practical scenarios --particularly in urban environments-- user distributions are typically non-uniform and may exhibit spatial clustering due to factors such as population density, infrastructure layout, or event-driven gatherings, for e.g. sports or music events. To address such cases, the geometric centroid of the user cluster can serve as the center of an adapted elliptical coverage footprint. Furthermore, statistical techniques can be employed to estimate the orientation and axis lengths of the ellipse that best represent the spatial distribution.  

\subsection{Further Research}
This paper lays the groundwork for extensive future research by building on the foundational assumptions. The analysis establishes a fundamental reference point by assuming equiprobable user locations within the elliptical disk. This baseline is a benchmark for exploring more specialized scenarios where specific locations may exhibit a higher probability of user presence. 

 Although the derived PDF and CDF are in closed form, their evaluation in dense MU settings can be computationally demanding due to numerical integration. Future work may employ high-density approximations by modeling users as a continuous spatial distribution, enabling simplified analysis. For example, with uniformly distributed users, the mean path loss over the elliptical region can approximate the SNR via the law of large numbers. Surrogate methods such as moment-based models or pre-computed lookup tables can further reduce real-time complexity, supporting scalable UAV network design. 

Another critical assumption concerns the consistent antenna gain throughout the coverage area. The potential variability of antenna gain with elevation and/or azimuth angles, as discussed in \cite{J:Kim22} and \cite{J:wang22}, remains pertinent and applicable in the context of VLC, where channel gain adheres to a Lambertian model \cite{J:Vaiopoulos2}.

 This work assumes a static elliptical footprint, but real UAV deployments face the dynamics of motion, trajectory shifts, and environmental effects such as wind or shadowing. These factors cause variation in footprint and user mobility. Future work can model time-varying ellipses by updating parameters such as tilt angle, beamwidths, and center position, enabling dynamic SNR and outage analysis. Incorporating user mobility models and real-time UAV control (e.g., repositioning, beam steering) would enhance the realism and applicability of the model in dynamic UAV networks \cite{J:Aalo}.  

\bibliographystyle{IEEEtran}
\bibliography{IEEEabrv,References}

\begin{thebibliography}{10}
\providecommand{\url}[1]{#1}
\csname url@samestyle\endcsname
\providecommand{\newblock}{\relax}
\providecommand{\bibinfo}[2]{#2}
\providecommand{\BIBentrySTDinterwordspacing}{\spaceskip=0pt\relax}
\providecommand{\BIBentryALTinterwordstretchfactor}{4}
\providecommand{\BIBentryALTinterwordspacing}{\spaceskip=\fontdimen2\font plus
\BIBentryALTinterwordstretchfactor\fontdimen3\font minus \fontdimen4\font\relax}
\providecommand{\BIBforeignlanguage}[2]{{%
\expandafter\ifx\csname l@#1\endcsname\relax
\typeout{** WARNING: IEEEtran.bst: No hyphenation pattern has been}%
\typeout{** loaded for the language `#1'. Using the pattern for}%
\typeout{** the default language instead.}%
\else
\language=\csname l@#1\endcsname
\fi
#2}}
\providecommand{\BIBdecl}{\relax}
\BIBdecl

\bibitem{B:Goldsmith}
A.~Goldsmith, \emph{Wireless Communications}.\hskip 1em plus 0.5em minus 0.4em\relax New York, NY, USA: Cambridge Univ. Press, 2005.

\bibitem{B:Maral}
G.~Maral, M.~Bousquet, and Z.~Sun, \emph{Satellite Communications Systems: {S}ystems, Techniques and Technology}, 6th~ed.\hskip 1em plus 0.5em minus 0.4em\relax Hoboken, NJ, USA: Wiley, 2020.

\bibitem{B:Zhang}
H.~Zhang, L.~Song, and Z.~Han, \emph{Unmanned Aerial Vehicle Applications over Cellular Networks for 5G and Beyond}.\hskip 1em plus 0.5em minus 0.4em\relax Cham, Switzerland: Springer-Verlag, 2020.

\bibitem{B:Ghassemlooy}
Z.~Ghassemlooy, L.~N. Alves, S.~Zvanovec, and M.-A. Khalighi, Eds., \emph{Visible Light Communications: Theory and Applications}.\hskip 1em plus 0.5em minus 0.4em\relax Boca Raton, FL, USA: CRC Press, 2017.

\bibitem{J:Okorafor}
U.~Okorafor and D.~Kundur, ``On the relevance of node isolation to the $k$-connectivity of wireless optical sensor networks,'' \emph{{IEEE} Trans. Mobile Comput.}, vol.~8, no.~10, pp. 1427--1440, Oct. 2009.

\bibitem{J:Malikov}
M.~Malikov, V.~Shin, and Y.~Kim, ``{5G} network slicing with unmanned aerial vehicles: Taxonomy, survey, and future directions,'' \emph{Trans. Emerg. Telecommun. Technol.}, vol.~33, no.~10, 2022.

\bibitem{C:Liu}
X.~Liu and D.~Qiao, ``Design and coverage analysis of elliptical cell based beamforming for {HSR} communication systems,'' in \emph{2019 IEEE Int. Conf. on Commun. (ICC)}, Shanghai, China, 20-24 May 2019.

\bibitem{J:Lin}
S.-H. Lin, Y.~Xu, and J.-Y. Wang, ``Coverage analysis and optimization for high-speed railway communication systems with narrow-strip-shaped cells,'' \emph{{IEEE} Trans. Veh. Technol.}, vol.~69, no.~10, pp. 11\,544--11\,556, 2020.

\bibitem{J:Noh}
S.-C. Noh, H.-B. Jeon, and C.-B. Chae, ``Energy-efficient deployment of multiple {UAV}s using ellipse clustering to establish base stations,'' \emph{{IEEE} Wireless Commun. Lett.}, vol.~9, no.~8, pp. 1155--1159, Aug. 2020.

\bibitem{C:Tamo}
A.~Tamo and A.~Hilario-Tacuri, ``Random generation of earth station positions within an elliptical satellite beam,'' in \emph{2018 IEEE XXV Int. Conf. on Electron., Elect. Eng. and Comput. (INTERCON)}, Lima, Peru, 08-10 Aug. 2018.

\bibitem{J:Azari}
M.~M. Azari, F.~Rosas, and S.~Pollin, ``Cellular connectivity for {UAVs}: Network modeling, performance analysis, and design guidelines,'' \emph{{IEEE} Trans. Wireless Commun.}, vol.~18, no.~7, pp. 3366--3381, Jul. 2019.

\bibitem{B:Chiu}
S.~N. Chiu, D.~Stoyan, W.~S. Kendall, and J.~Mecke, \emph{Stochastic Geometry and its Applications}, 3rd~ed.\hskip 1em plus 0.5em minus 0.4em\relax Wiley: Hoboken, NJ, USA, 2013.

\bibitem{J:Hmamouche}
Y.~Hmamouche, M.~Benjillali, S.~Saoudi, H.~Yanikomeroglu, and M.~Di~Renzo, ``New trends in stochastic geometry for wireless networks: A tutorial and survey,'' \emph{Proc. {IEEE}}, vol. 109, no.~7, pp. 1200--1252, 2021.

\bibitem{B:Mathai}
A.~M. Mathai, \emph{An Introduction to Geometrical Probability: Distributional Aspects with Applications}.\hskip 1em plus 0.5em minus 0.4em\relax Amsterdam, The Netherlands: Gordon \& Breach, 1999.

\bibitem{B:Zhuang}
Y.~Zhuang, ``A geometrical probability approach to location-critical network performance metrics,'' Ph.D. dissertation, Univ. Virginia, Hamilton, VA, USA, 2012.

\bibitem{J:Gupta}
A.~Gupta, N.~Sharma, P.~Garg, and M.-S. Alouini, ``Cascaded {FSO-VLC} communication system,'' \emph{{IEEE} Trans. Wireless Commun.}, vol.~6, no.~6, pp. 810--813, 2017.

\bibitem{J:Christopoulou}
C.~Christopoulou, H.~G. Sandalidis, and N.~Vaiopoulos, ``Performance of an underwater optical wireless link with a randomly placed or moving receiver,'' \emph{{IEEE} J. Ocean. Eng.}, vol.~46, no.~3, pp. 1068--1079, Jul. 2020.

\bibitem{J:Vaiopoulos3}
N.~Vaiopoulos, A.~Vavoulas, and H.~G. Sandalidis, ``An assessment of a unmanned aerial vehicle-based broadcast scenario assuming random terrestrial user locations,'' \emph{IET Optoelectron.}, vol.~15, no.~3, pp. 121--130, 2021.

\bibitem{J:Babu20}
N.~Babu, C.~B. Papadias, and P.~Popovski, ``Energy-efficient {3-D} deployment of aerial access points in a {UAV} communication system,'' \emph{{IEEE} Commun. Lett.}, vol.~24, no.~12, pp. 2883--2887, 2020.

\bibitem{J:Lai20}
C.-C. Lai, L.-C. Wang, and Z.~Han, ``The coverage overlapping problem of serving arbitrary crowds in {3D} drone cellular networks,'' \emph{{IEEE} Trans. Mobile Comput.}, vol.~21, no.~3, pp. 1124--1141, 2020.

\bibitem{J:Nafees21}
M.~Nafees, J.~Thompson, and M.~Safari, ``Multi-tier variable height {UAV} networks: User coverage and throughput optimization,'' \emph{{IEEE} Access}, vol.~9, pp. 119\,684--119\,699, 2021.

\bibitem{J:Qureshi19}
H.~N. Qureshi and A.~Imran, ``On the tradeoffs between coverage radius, altitude, and beamwidth for practical {UAV} deployments,'' \emph{{IEEE} Trans. Aerosp. Electron. Syst.}, vol.~55, no.~6, pp. 2805--2821, 2019.

\bibitem{J:Talgat}
A.~Talgat, M.~A. Kishk, and M.-S. Alouini, ``Nearest neighbor and contact distance distribution for binomial point process on spherical surfaces,'' \emph{{IEEE} Commun. Lett.}, vol.~24, no.~12, pp. 2659--2663, Dec. 2020.

\bibitem{J:Okati}
N.~Okati, T.~Riihonen, D.~Korpi, I.~Angervuori, and R.~Wichman, ``Downlink coverage and rate analysis of low earth orbit satellite constellations using stochastic geometry,'' \emph{{IEEE} Trans. Commun.}, vol.~68, no.~8, pp. 5120--5134, Aug. 2020.

\bibitem{J:Vaiopoulos}
N.~Vaiopoulos, A.~Vavoulas, and H.~G. Sandalidis, ``Exploring the random location problem inside a truncated conic shape: Application in {UAV} communications,'' \emph{{IEEE} Trans. Commun.}, vol.~70, no.~4, pp. 2882--2890, Apr. 2022.

\bibitem{J:Vaiopoulos2}
------, ``Impact of a randomly placed terminal on {LiFi} performance,'' \emph{{IEEE} Trans. Commun.}, vol.~70, no.~3, pp. 1875--1885, Mar. 2022.

\bibitem{J:Vaiopoulos4}
N.~Vaiopoulos, A.~Vavoulas, H.~E. Nistazakis, H.~G. Sandalidis, and A.~Kakarountas, ``On the {UOWC} performance under location uncertainty,'' \emph{{IEEE} Access}, vol.~11, pp. 38\,783--38\,794, 2023.

\bibitem{J:Yuan20}
X.~Yuan, Z.~Feng, W.~Ni, Z.~Wei, R.~P. Liu, and C.~Xu, ``Connectivity of uav swarms in {3D} spherical spaces under (un)intentional ground interference,'' \emph{{IEEE} Trans. Veh. Technol.}, vol.~69, no.~8, pp. 8792--8804, 2020.

\bibitem{J:Armeniakos20}
C.~K. Armeniakos, P.~S. Bithas, and A.~G. Kanatas, ``{SIR} analysis in {3D UAV} networks: A stochastic geometry approach,'' \emph{{IEEE} Access}, vol.~8, pp. 204\,963--204\,973, 2020.

\bibitem{J:Geciauskas}
E.~Ge{\v{c}}iauskas, ``Distribution function of a distance between two random points in an oval,'' \emph{Lith. Math. J.}, vol.~6, no.~2, pp. 245--248, 1966, (in Russian).

\bibitem{J:Geciauskas2}
------, ``The distribution of a distance from inside point of oval to its contour,'' \emph{Lith. Math. J.}, vol.~8, no.~4, pp. 735--737, 1968, (in Russian).

\bibitem{J:Piefke}
F.~Piefke, ``Chord length distribution of the ellipse,'' \emph{Lith. Math. J.}, vol.~19, no.~3, pp. 325--333, 1979.

\bibitem{B:Parry}
M.~L. Parry, ``Application of geometric probability techniques to elementary particle and nuclear physics,'' Ph.D. dissertation, Purdue University, USA, 1998.

\bibitem{J:Ertel}
R.~B. Ertel and J.~H. Reed, ``Angle and time of arrival statistics for circular and elliptical scattering models,'' \emph{{IEEE} J. Sel. Areas Commun.}, vol.~17, no.~11, pp. 1829--1840, 1999.

\bibitem{J:Malikov23}
M.~Malikov, V.~Shin, and Y.~Kim, ``Tilted circumnavigation of multiple drones around multiple targets,'' \emph{{IEEE} Access}, vol.~11, pp. 132\,916--132\,929, 2023.

\bibitem{J:He2017}
H.~He, S.~Zhang, Y.~Zeng, and R.~Zhang, ``Joint altitude and beamwidth optimization for {UAV}-enabled multiuser communications,'' \emph{{IEEE} Commun. Lett.}, vol.~22, no.~2, pp. 344--347, 2017.

\bibitem{B:Bronshtein}
I.~N. Bronshtein, K.~A. Semendyayev, G.~Musiol, and H.~M{\"u}hlig, \emph{Handbook of Mathematics}, 6th~ed.\hskip 1em plus 0.5em minus 0.4em\relax New York, NY, USA: Springer-Verlag, 2015.

\bibitem{J:Messaoudi}
K.~Messaoudi, A.~Baz, O.~Sami~Oubbati, A.~Rachedi, T.~Bendouma, and M.~Atiquzzaman, ``{UGV} charging stations for {UAV-Assisted AoI-Aware} data collection,'' \emph{{IEEE} Trans. on Cogn. Commun. Netw.}, vol.~10, no.~6, pp. 2325--2343, Dec. 2024.

\bibitem{B:Papoulis}
A.~Papoulis and S.~U. Pillai, \emph{Probability, Random Variables, and Stochastic Processes}, 4th~ed.\hskip 1em plus 0.5em minus 0.4em\relax New York, NY, USA: McGraw-Hill, 2002.

\bibitem{J:Oubbati}
O.~S. Oubbati, M.~Atiquzzaman, A.~Baz, H.~Alhakami, and J.~Ben-Othman, ``Dispatch of {UAVs} for urban vehicular networks: A deep reinforcement learning approach,'' \emph{{IEEE} Trans. Veh. Technol.}, vol.~70, no.~12, pp. 13\,174--13\,189, 2021.

\bibitem{J:Jeon}
H.-B. Jeon, S.-H. Park, J.~Park, K.~Huang, and C.-B. Chae, ``An energy-efficient aerial backhaul system with reconfigurable intelligent surface,'' \emph{{IEEE} Trans. Wireless Commun.}, vol.~21, no.~8, pp. 6478--6494, 2022.

\bibitem{C:Yanmaz}
E.~Yanmaz, R.~Kuschnig, and C.~Bettstetter, ``Achieving air-ground communications in 802.11 networks with three-dimensional aerial mobility,'' in \emph{2013 Proc. IEEE INFOCOM}, Turin, Italy, 14--19 Apr. 2013, pp. 120--124.

\bibitem{C:Khawaja2016}
W.~Khawaja, I.~Guvenc, and D.~Matolak, ``{UWB} channel sounding and modeling for {UAV} air-to-ground propagation channels,'' in \emph{2016 IEEE {Glob. Commun. Conf.} (GLOBECOM)}, Washington, DC, USA, 04--08 Dec. 2016.

\bibitem{B:Singhal}
C.~Singhal and S.~De, Eds., \emph{Resource Allocation in Next-Generation Broadband Wireless Access Networks}.\hskip 1em plus 0.5em minus 0.4em\relax Hershey, PA, USA: IGI Global, 2017.

\bibitem{B:Simon}
M.~K. Simon and M.-S. Alouini, \emph{Digital Communication over Fading Channels}, 2nd~ed.\hskip 1em plus 0.5em minus 0.4em\relax New York, NY, USA: Wiley, 2005.

\bibitem{Wolfram}
\BIBentryALTinterwordspacing
S.~Wolfram, ``The mathematical functions site,'' \emph{{Wolfram} Research}. [Online]. Available: \url{http://functions.wolfram.com (accessed Jan. 29, 2023).}
\BIBentrySTDinterwordspacing

\bibitem{J:Khuwaja}
A.~A. Khuwaja, Y.~Chen, and G.~Zheng, ``Effect of user mobility and channel fading on the outage performance of {UAV} communications,'' \emph{{IEEE} Wireless Commun. Lett.}, vol.~9, no.~3, pp. 367--370, 2019.

\bibitem{J:Khalid13}
Z.~Khalid and S.~Durrani, ``Distance distributions in regular polygons,'' \emph{{IEEE} Trans. Veh. Technol.}, vol.~62, no.~5, pp. 2363--2368, 2013.

\bibitem{J:Kim22}
S.~Kim, M.~Kim, J.~Y. Ryu, J.~Lee, and T.~Q. Quek, ``Non-terrestrial networks for {UAVs}: Base station service provisioning schemes with antenna tilt,'' \emph{{IEEE} Access}, vol.~10, pp. 41\,537--41\,550, 2022.

\bibitem{J:wang22}
Y.~Wang, M.~Giordani, X.~Wen, and M.~Zorzi, ``On the beamforming design of millimeter wave {UAV} networks: Power vs. capacity trade-offs,'' \emph{Comput. Netw.}, vol. 205, p. 108746, 2022.

\bibitem{J:Aalo}
V.~A. Aalo, C.~Mukasa, and G.~P. Efthymoglou, ``Effect of mobility on the outage and ber performances of digital transmissions over {N}akagami-$m$ fading channels,'' \emph{{IEEE} Trans. Veh. Technol.}, vol.~65, no.~4, pp. 2715--2721, Apr. 2016.

\end{thebibliography}

\balance
\end{document}